\definecolor{Darkblue}{rgb}{0,0,0.4}
\definecolor{Brown}{cmyk}{0,0.81,1.,0.60}
\definecolor{Purple}{cmyk}{0.45,0.86,0,0}
\newcommand{\mydriver}{hypertex}
 \renewcommand{\mydriver}{pdftex}
\newcommand{\lref}[2][]{\hyperref[#2]{#1~\ref*{#2}}}
\newtheorem{theorem}{Theorem}[section]
\newtheorem{claim}[theorem]{Claim}
\newtheorem{proposition}[theorem]{Proposition}
\newtheorem{lemma}[theorem]{Lemma}
\newtheorem{corollary}[theorem]{Corollary}
\newtheorem{conjecture}[theorem]{Conjecture}
\theoremstyle{definition}
\newtheorem{definition}[theorem]{Definition}
\def\A{{B}}
\def\E{{\rm E}}
\def\Re{{\mathbb R}}
\def\Pr{{\rm Pr}}
\newcommand{\factor}{\ensuremath{O(\smash{\frac{\log n}{\log\log n}})}}
\newcounter{note}[section]
\newcommand{\initOneLiners}{%
    \setlength{\itemsep}{0pt}
    \setlength{\parsep }{0pt}
    \setlength{\topsep }{0pt}
}
\newenvironment{OneLiners}[1][\ensuremath{\bullet}]
    {\begin{list}
        {#1}
        {\initOneLiners}}
    {\end{list}}
\def\polhk#1{\setbox0=\hbox{#1}{\ooalign{\hidewidth\lower1.5ex\hbox{`}\hidewidth\crcr\unhbox0}}}
\title{An Improved Integrality Gap for Asymmetric TSP Paths\thanks{An
    extended abstract of this paper appears in the \emph{Proceedings of
      the 16th Conference on Integer Programming and Combinatorial
      Optimization, 2013}.}}
\author{
Zachary Friggstad\thanks{Department of Computing Science, University of Alberta.}
\and
Anupam Gupta\thanks{Department of Computer Science, Carnegie Mellon University, Pittsburgh
    PA 15213, and Microsoft Research SVC, Mountain View, CA
    94043. Research was partly supported by NSF awards CCF-0964474 and
    CCF-1016799.}
\and
Mohit Singh\thanks{Microsoft Research, Redmond.}
}
\begin{document}

\maketitle

\begin{abstract}

  \bigskip
  The Asymmetric Traveling Salesperson Path Problem (ATSPP) is one
  where, given an \emph{asymmetric} metric space $(V,d)$ with specified
  vertices $s$ and $t$, the goal is to find an $s$-$t$ path of minimum
  length that passes through all the vertices in $V$.

  \medskip
  This problem is closely related to the Asymmetric TSP (ATSP),
  which seeks to find a tour (instead of an $s$-$t$ path) visiting all
  the nodes: for ATSP, a $\rho$-approximation guarantee implies an
  $O(\rho)$-approximation for ATSPP. However, no such connection is
  known for the \emph{integrality gaps} of the linear programming
  relaxations for these problems: the current-best approximation
  algorithm for ATSPP is $O(\log n/\log\log n)$, whereas the best bound
  on the integrality gap of the natural LP relaxation (the subtour
  elimination LP) for ATSPP is $O(\log n)$.

  \medskip
  In this paper, we close this gap, and improve the current best bound
  on the integrality gap from $O(\log n)$ to $O(\log n/\log\log n)$. The
  resulting algorithm uses the structure of narrow $s$-$t$ cuts in the
  LP solution to construct a (random) tree spanning tree that can be cheaply augmented
  to contain an Eulerian $s$-$t$ walk.

  \medskip
  We also build on a result of Oveis Gharan and Saberi and show
  a strong form of Goddyn's conjecture about thin spanning trees
  implies the integrality gap of the subtour elimination LP relaxation for ATSPP is bounded by a constant.
  Finally, we give a simpler family of instances showing the
  integrality gap of this LP is at least $2$.
\end{abstract}



\section{Introduction}
\label{sec:introduction}

In the Asymmetric Traveling Salesperson Path Problem (ATSPP), we are
given an \emph{asymmetric} metric space $(V,d)$ (i.e., one where the
distances satisfy the triangle inequality, but potentially not the
symmetry condition), and also specified source and sink vertices $s$ and
$t$, and the goal is to find an $s$-$t$ Hamilton path of minimum length.

ATSPP is a close relative of Asymmetric TSP
(ATSP), where the goal is to find a Hamilton tour instead of an $s$-$t$
path.  For ATSP, the $\log_2 n$-approximation of Frieze,
Galbiati, and Maffioli~\cite{FGM} from 1982 was the best result known
for more than two decades, until it was finally improved by constant
factors in~\cite{Blaser08,KLSS05,FS07}. A breakthrough on this problem
was an $\factor$-approximation due to Asadpour, Goemans,
M{\polhk{a}}dry, Oveis Gharan, and Saberi~\cite{AGMSS}; they also
bounded the integrality gap of the subtour elimination linear
programming relaxation for ATSP by the same factor.

Somewhat surprisingly, the study of ATSPP has been of a more recent
vintage: the first approximation algorithms appeared only around
2005~\cite{LN08,CP07,FS07}. It is easily seen that the ATSP reduces to
ATSPP in an approximation-preserving fashion (by guessing two
consecutive nodes on the tour).  In the other direction, Feige and Singh~\cite{FS07}
show that a $\rho$-approximation for ATSP implies an
$O(\rho)$-approximation for ATSPP. Using the above-mentioned
$\factor$-approximation for ATSP~\cite{AGMSS}, this implies an
$\factor$-approximation for ATSPP as well.

The subtour elimination linear program generalizes simply to ATSPP
and is given in \lref[Section]{sec:rounding-algorithm}. However,
prior to our work, the best integrality gap known for this LP for ATSPP was still $O(\log
n)$~\cite{FSS10}. 
In this paper we show the following result.

\begin{theorem}\label{thm:main}
  The integrality gap of the subtour elimination linear program for
  ATSPP is $\factor$.
\end{theorem}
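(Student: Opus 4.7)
The plan is to adapt the maximum-entropy rounding approach of Asadpour et al.~\cite{AGMSS} for ATSP, but to pay special attention to the structure of $s$-$t$ cuts in the fractional solution. Fix an optimal LP solution $x^*$ of cost $\OPT_{LP}$; the goal is to round $x^*$ to a random $s$-$t$ Hamilton path of expected cost $\factor \cdot \OPT_{LP}$.

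First I would identify the \emph{narrow} $s$-$t$ cuts, namely those cuts $S$ with $s\in S$, $t\notin S$ for which $x^*(\delta^+(S))$ lies below a suitable threshold (say $1+\eps$). Since every $s$-$t$ cut has LP value at least $1$, the narrow cuts are only slightly loose; and since the cut function is submodular with the single-commodity $s$-$t$ flow saturating value $1$ on each of them, a standard uncrossing argument shows that the narrow cuts form a chain $\{s\}=S_0\subsetneq S_1\subsetneq\cdots\subsetneq S_k=V\setminus\{t\}$. This chain organizes the entire rounding: across each narrow interface the LP has essentially no slack, so any rounded structure must cross it exactly once, while across non-narrow cuts one can afford the usual $\factor$ slack.

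Next, after a small perturbation of $x^*$ that adds a bit of mass from $t$ back toward $s$ to turn the fractional path into a fractional Eulerian circulation, I would write the symmetrized version of the solution as a convex combination of spanning trees whose arc-orientations are consistent with the chain, i.e.\ each narrow $S_i$ is crossed by exactly one tree arc, oriented from the $s$-side to the $t$-side. This is achieved by invoking the maximum-entropy sampling framework of~\cite{AGMSS}, whose distribution preserves marginals up to a $(1-\tfrac{1}{n})$ factor and is negatively correlated on every cut. Sample such a tree $T$; thinness gives that $T$ is not much thicker than the LP on any cut, and the chain condition guarantees $T$ has a natural $s$-to-$t$ backbone.

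The final and most delicate step is to Eulerian-augment $T$ into an $s$-$t$ walk. I would set up a min-cost flow problem whose feasible solutions are arc multisets that, when added to $T$, equalize in- and out-degrees at every internal vertex while leaving $s$ with excess out-degree $1$ and $t$ with excess in-degree $1$; shortcutting the resulting Eulerian $s$-$t$ trail then yields the Hamilton path. The key claim is that a suitable scaling of $x^*$ itself is feasible for this flow, and that the expected minimum cost is $\factor\cdot\OPT_{LP}$, by combining negative correlation of the tree sample with a Chernoff bound over cuts. I expect this last step to be the main obstacle: unlike the tour setting where one only has to balance degrees, here the augmentation must also respect the chain of narrow cuts (so the resulting walk crosses each $S_i$ exactly once rather than three or more times, which shortcutting cannot fix without losing the Hamilton property), and designing a feasible region that simultaneously enforces this while still admitting $x^*$ as a fractional witness is the core technical difficulty.
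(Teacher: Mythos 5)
Your high-level architecture matches the paper's: identify narrow $s$-$t$ cuts, show they form a chain, sample a negatively-correlated spanning tree whose undirected marginals come from (some version of) $x^*$, and then patch via a circulation to get an Eulerian $s$-$t$ trail, shortcutting at the end. But there is a genuine gap at the step you gloss over, and it is exactly the technical heart of the paper.

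You write that one can ``write the symmetrized version of the solution as a convex combination of spanning trees whose arc-orientations are consistent with the chain, i.e.\ each narrow $S_i$ is crossed by exactly one tree arc, oriented from the $s$-side to the $t$-side.'' For the raw symmetrized $x^*$ this is simply false: a narrow cut $S_i$ may have $x^*(\partial^+(S_i))=1.2$ and $x^*(\partial^-(S_i))=0.2$, so its undirected cut value is $1.4$, and no marginal-preserving, negatively-correlated distribution over spanning trees can guarantee exactly one crossing, let alone enforce orientation. (Max-entropy sampling does not let you choose orientations; the orientation of a sampled edge is already determined by which directed arc you placed the marginal on.) What the paper actually does is replace $x^*$ by a \emph{different} vector $z$, built explicitly so that $z(\partial^+(U))=1$ and $z(\partial^-(U))=0$ on every narrow cut, with the key constraint that $z$ still lies in the spanning tree polytope after symmetrization and $z\le \frac{1}{1-3\tau}x^*$ coordinate-wise. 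Proving such a $z$ exists is nontrivial: one needs to show (i) almost all crossing mass of a narrow cut is ``local,'' i.e.\ between consecutive chain layers $L_i$ and $L_{i+1}$, and (ii) within each layer $L_i$ the restriction of $x^*$ nearly satisfies the partition inequalities that characterize the convex hull of connected subgraphs --- so that one can delete all non-local crossing arcs, normalize the local ones, and boost internal arcs, all while staying inside the tree polytope with only a small multiplicative loss. None of this appears in your sketch, and without it the rest of your argument has nothing to sample from.

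A secondary misconception: you say a walk that crosses a narrow cut three or more times is a problem ``which shortcutting cannot fix without losing the Hamilton property.'' Shortcutting an Eulerian $s$-$t$ trail past repeated vertices always yields a Hamilton $s$-$t$ path, regardless of how many times the trail crosses any cut. The real failure if the tree crosses a narrow cut in both directions is in Hoffman's feasibility/cost bound: you can only ``pay'' for backward crossings with $x^*$ mass in $\partial^-(U)$, and on a $\tau$-narrow cut that mass is under $\tau$, so the circulation either doesn't exist or cannot be charged to $O(\alpha)\cdot\OPT_{LP}$. Fixing the tree so narrow cuts are crossed once, forward only, is what makes the Hoffman check go through with a single unit contributed by the $ts$ arc. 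Finally, the ``perturbation of $x^*$ that adds a bit of mass from $t$ back toward $s$'' is not what the paper does: $x^*$ and $z$ stay $s$-$t$ path-like; the $ts$ arc only enters as a lower/upper-bounded arc in the circulation instance.
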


We also explore the connection between integrality gaps for ATSPP
and the so-called ``thin trees conjecture''. In particular, if Goddyn's conjecture
regarding thin trees holds with strong-enough quantitative bounds
then the integrality gap of the subtour elimination LP for ATSPP
is bounded by a constant. The precise statement of the conjecture and of
our result can be found in \lref[Section]{sec:thin}. This is analogous to
a similar statement made by Oveis Gharan and Saberi regarding
the integrality gap of the subtour elimination LP for ATSP~\cite{GS11}.

Finally, we give a simple construction showing that the integrality gap of
this LP is at least~$2$; this example is simpler than previous known
integrality gap instance showing the same lower bound, due to Charikar,
Goemans, and Karloff~\cite{CGK06}.

Given the central nature of linear programs in approximation algorithms,
it is useful to understand the integrality gaps for linear programming
relaxations of optimization problems. Not only does this study give us a
deeper understanding into the underlying problems, but upper bounds
on the integrality gap of LPs are often useful in approximating related problems.
For example, the polylogarithmic approximation guarantees in
the work of Nagarajan and Ravi~\cite{NR07} for Directed Orienteering and
Minimum Ratio Rooted Cycle, and those in the work of Bateni and
Chuzhoy~\cite{BC10} for Directed $k$-Stroll and Directed $k$-Tour were
all improved by a factor of $\log\log n$ following the improved bound of
$\factor$ on the integrality gap of the subtour LP relaxation for ATSP.
We emphasize that these improvements required the integrality gap
bound improvement for ATSP, not merely improved approximation guarantees.

\subsection{Our Approach}
\label{sec:approach}

Our approach to bound the integrality gap for ATSPP is similar to that
for ATSP~\cite{AGMSS,GS11}, but with some crucial differences.
To prove \lref[Theorem]{thm:main}, we sample a
random spanning tree in the underlying undirected multigraph
and then augment the directed version of this tree
to an integral circulation using Hoffman's circulation theorem while
ensuring the $t$-$s$ edge is only used once. The support of this circulation
is weakly connected, so it can be used to obtain an Eulerian circuit with no greater cost.
Deleting the $t$-$s$ edge from this walk results in a spanning $s$-$t$ walk.

However, the non-Eulerian nature of ATSPP makes it difficult
to satisfy the cut requirements in Hoffman's circulation theorem if we
sample the spanning tree directly from the distribution given by the LP
solution. It turns out that the problems come from the $s$-$t$ cuts $U$
that are nearly-tight: i.e., which satisfy $1 < x^*(\partial^+(U)) <
1+\tau$ for some small constant $\tau$ --- these give rise to problems
when the sampled spanning tree includes more than one edge across this
cut. Such problems also arise in the symmetric TSP paths case (studied
in the recent papers of An, Kleinberg, and Shmoys~\cite{AKS12} and Seb\H{o}~\cite{Sebo13}): their
approach is again to take a random tree directly from the distribution
given by the optimal LP solution, but in some cases they need to boost
the narrow cuts, and they show that the loss due to this boosting is
small.

In our case, the asymmetry in the problem means that boosting the narrow
cuts might be prohibitively expensive. Hence, our idea is to preprocess
the distribution given by the LP solution to \emph{tighten} the narrow
cuts, so that we never pick two edges from a narrow cut. Since the
original LP solution lies in the spanning tree polytope, lowering the
fractional value on some edges means we need to raise the fractional value on other edges. This would cause
the costs to increase, and the technical heart of the paper is to ensure
this can be done with a small increase in the cost.

Our approach for proving an $O(1)$ integrality gap bound
under the thin trees conjecture is similarly inspired by related work
for ATSP~\cite{GS11},
but, again, we must be careful to ensure that the thin tree
crosses each narrow cut exactly once.
We do this by finding a cheap thin tree ``between'' narrow cuts
(which we will prove are nested) and then chaining these thin together
trees by selecting a single edge across each narrow cut.
The resulting tree will have the desired thinness properties.




\subsection{Other Related Work}
\label{sec:related-work}

The first non-trivial approximation for ATSPP was an
$O(\sqrt{n})$-approximation by Lam and Newman~\cite{LN08}. This was
improved to $O(\log n)$ by Chekuri and P\'al~\cite{CP07}, and the
constant was further improved in~\cite{FS07}. The paper~\cite{FS07} also
showed that a $\rho$-approximation algorithm for ATSP can be used
to obtain an $O(\rho)$-approximation algorithm for ATSPP.
All these results are combinatorial and do not bound
integrality gap of ATSPP. A bound of $O(\sqrt{n})$ on the integrality
gap of ATSPP was given by Nagarajan and Ravi~\cite{NR-direct-latency},
and was improved to $O(\log n)$ by Friggstad, Salavatipour and
Svitkina~\cite{FSS10}. Note that there is still no result known that
relates the integrality gaps of subtour elimination relaxations for ATSP and ATSPP in a
black-box fashion.

In the symmetric case (where the problems become TSPP and TSP
respectively), constant factor approximations and integrality gaps have
long been known. We do not survey the rich body of literature on TSP
here, instead pointing the reader to, e.g., the recent paper on
graphical TSP by Seb\H{o} and Vygen~\cite{SV12}. An exception is a
result of An, Kleinberg, and Shmoys~\cite{AKS12}, who give an upper
bound of $1.618$ on integrality gap of the LP relaxation for the TSPP
problem; their algorithm also proceeds via studying the narrow $s$-$t$
cuts, and the connections to our work are discussed in
\lref[Section]{sec:approach}.  This bound on the integrality gap was
subsequently improved to $1.6$ via a more refined analysis by
Seb\H{o}~\cite{Sebo13}.

\subsection{Notation and Preliminaries}

Given a directed graph $G = (V,A)$, and two disjoint sets $U, U'
\subseteq V$, let $\partial(U;U') = A \cap (U \times U')$. We use the
standard shorthand that $\partial^+(U) := \partial(U; V \setminus U)$,
and $\partial^-(U) := \partial(V \setminus U; U)$. When the set $U$ is a
singleton (say $U = \{u\}$), we use $\partial^+(u)$ or $\partial^-(u)$
instead of $\partial^+(\{u\})$ or $\partial^-(\{u\})$. For undirected
graph $H = (V,E)$, we use $\partial(U; U')$ to denote edges crossing
between $U$ and $U'$, and $\partial(U)$ to denote the edges with exactly
one endpoint in $U$ (which is the same as $\partial(V \setminus U))$.
For any subset $U \subseteq V$ we let $A(U)$ denote $A \cap (U \times U)$, the set of
arcs with both endpoints in $U$.
If we are discussing subsets of arcs $\A$ of $G$, we add subscripts
to the $\partial$ notation to indicate we only consider those arcs crossing
the cut that in are $\A$.
For example, $\partial_{\A}^+(U)$ denotes $\partial^+(U) \cap  \A$
and so on. A collection of subsets of $V$, say $\pi$ is a partition if each element of $V$ occurs in exactly one part of $\pi$. Given a graph $G=(V,E)$ and a partition
$\Pi$ of $V$, we let $\partial(\pi)$ to be the set of edges in $E$ which have endpoints in different sets of $\pi$.

For a digraph $G = (V,A)$, a set of arcs $B \subseteq A$ is \emph{weakly
  connected} if the undirected version of $B$ forms a connected graph
that spans all vertices in $V$.

For values $x_a \in \Re$ for all $a \in A$, and a set of arcs $B
\subseteq A$, we let $x(B)$ denote the sum $\sum_{a \in B} x_a$.

Given an undirected graph $H = (V,E)$ and a subset of edges
$F \subseteq E$, we let $\chi_F \in \{0,1\}^{|E|}$ denote
the characteristic vector $F$. The spanning
tree polytope is the convex hull of $\{ \chi_T \mid T \text{ spanning
  tree of } H \}$. See, e.g.,~\cite[Chapter~50]{Schrijver-book} for
several equivalent linear programming formulations of this polytope.
We sometimes abuse notation and call a set of directed arcs $T$
a tree if the undirected version of $T$ is a tree in the usual sense.

A \emph{directed metric graph} on vertices $V$ has arcs $A = \{uv : u,v \in V, u \neq v\}$
where the non-negative arc costs satisfy the triangle
inequality $c_{uv} + c_{vw} \geq c_{uw}$ for all $u,v,w \in V$. However,
arcs $uv$ and $vu$ need not have the same cost. An instance of ATSPP
is a directed metric graph along with distinguished vertices $s
\neq t$.


\section{The Rounding Algorithm}
\label{sec:rounding-algorithm}

In this section, we give the linear programming relaxation for
ATSPP, and show how to round a feasible solution $x$ to this LP to get a path of
cost $\factor$ times the cost of $x$.
We then give the proof, with some of the details being deferred to the
following sections.

Given a directed metric graph $G = (V,A)$ with arc costs $\{c_a\}_{a \in
  A}$, we use the following standard linear programming relaxation for
ATSPP which is also known as the subtour elimination linear program.
  \begin{align}
    {\rm minimize}: \sum_{a \in A} c_a x_a & & \tag{{\em
        ATSPP}} \label{eq:lp} \\
    {\rm s.t.}:~~
     x(\partial^+(s)) = x(\partial^-(t)) & =  1 & \\
     x(\partial^-(s)) = x(\partial^+(t)) & =  0 & \\
     x(\partial^+(v)) = x(\partial^-(v)) & =  1 & \forall~ v \in V\setminus \{s,t\}\\
     x(\partial^+(U)) & \geq  1 & \forall~ \{s\} \subseteq U \subsetneq V \\
     x_a & \geq  0 & \forall~ a \in A \notag
\end{align}
%

Constraints (4) can be separated over in polynomial time using
standard min-cut algorithms, so this LP can be solved
in polynomial time using the ellipsoid method.
We begin by solving the above LP to obtain an optimal solution $x^*$.
Consider the undirected (multi)graph $H = (V,E)$ obtained by removing
the orientation of the arcs of $G$. That is, create precisely two edges
between every two nodes $u,v \in V$ in $H$, one having cost $c_{uv}$ and
the other having cost $c_{vu}$. (Hence, $|E| = |A|$.) For a point $w \in
\Re_+^A$, let $\kappa(w)$ denote the corresponding point in $\Re_+^E$,
and view $\kappa(w)$ as the ``undirected'' version of $w$.

%

We will use the following definition: An \emph{$s$-$t$ cut} is a subset
$U \subset V$ such that $\{s\} \subseteq U \subseteq V\setminus
\{t\}$. The following fact will be used throughout the paper.

\begin{claim} \label{claim:lp}
Let $x^*$ be a feasible solution to LP (\ref{eq:lp}). For any $s-t$ cut $U$, $x^*(\partial^+(U)) - x^*(\partial^-(U)) = 1$.
Also, $x(\partial^+(U)) = x^*(\partial^-(U))$ for every nonempty $U \subseteq V \setminus \{s,t\}$.
\end{claim}
\begin{proof}
For any nonempty subset of vertices $U$ we have
\begin{eqnarray*}
x^*(\partial^+(U)) - x^*(\partial^-(U)) & = & \left(\sum_{e \in \partial^+(U)} x^*_e - \sum_{e \in A(U)}  x^*_e\right) - \left(\sum_{e \in \partial^-(U)} x^*_e - \sum_{e \in A(U)} x^*_e \right) \\
& = & \sum_{v \in U} x(\partial^+(v)) - \sum_{v \in U} x(\partial^-(v)). \\
\end{eqnarray*}
If $U$ is an $s-t$ cut, then the first sum in the last expression is $|U|$ and the second sum is $|U|-1$ by Constraints (1), (2), and (3). If $U \subseteq V \setminus \{s,t\}$,
then both sums are equal to $|U|$ by Constraints (3).
\end{proof}

\begin{definition}[Narrow cuts]
  Let $\tau \geq 0$. An $s$-$t$ cut $U$ is \emph{$\tau$-narrow} if
  $x^*(\partial^+(U)) < 1 + \tau$ (or equivalently, $x^*(\partial^-(U)) <
  \tau$).
\end{definition}

The main technical lemma is the following:
\begin{lemma}\label{lem:fix_sol}
  For any $\tau \in [0,1/4]$, one can find, in polynomial-time, a vector
  $z \in [0,1]^A$ (over the directed arcs) such that:
  \begin{enumerate}
  \item[(a)] its undirected version $\kappa(z)$ lies in the spanning tree
    polytope for $H$,
  \item[(b)] $z \leq \frac{1}{1-3\tau} \, x^*$ (where the inequality
    denotes component-wise dominance), and
  \item[(c)] $z(\partial^+(U)) = 1$ and $z(\partial^-(U)) = 0$ for every
    $\tau$-narrow $s$-$t$ cut $U$.
\end{enumerate}
\end{lemma}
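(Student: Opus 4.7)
The plan is to construct $z$ in two phases: first fix its structure across narrow cuts via a flow-like argument, then fill in the remaining arcs so that $\kappa(z)$ lies in the spanning tree polytope of $H$.

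\emph{Phase 1: Structure of narrow cuts.} I would first show that the $\tau$-narrow $s$-$t$ cuts can be arranged into a chain $\{s\}\subseteq U_1\subsetneq\cdots\subsetneq U_k\subseteq V\setminus\{t\}$. The main tool is submodularity of $f(U):=x^*(\partial^+(U))$ together with the LP lower bound $f(U)\geq 1$ on every $s$-$t$ cut: if two narrow cuts $U,U'$ cross, then $f(U\cap U')+f(U\cup U')\leq f(U)+f(U')<2(1+\tau)$ while each term on the left is at least $1$, yielding a bound of $2\tau$ on the total $x^*$-mass of arcs between $U\setminus U'$ and $U'\setminus U$. For $\tau\leq 1/4$ this slack is small enough to uncross $\{U,U'\}$ into $\{U\cap U',U\cup U'\}$, iterating to obtain the chain. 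Writing $V_i:=U_i\setminus U_{i-1}$ (with $U_0=\emptyset$, $U_{k+1}=V$), condition~(c) translates into the combinatorial requirement that $z$ is supported on intra-slice arcs and on \emph{forward} inter-slice arcs (those from $V_i$ to $V_j$ with $i<j$), and these forward arcs carry a unit flow from $V_1$ to $V_{k+1}$ in the quotient digraph.

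\emph{Phase 2: Building $z$.} On the forward inter-slice arcs I would route exactly $1$ unit of flow from $V_1$ to $V_{k+1}$ using capacities $\tfrac{1}{1-3\tau}\,x^*_a$; feasibility follows from max-flow/min-cut since each forward cut $\partial^+(U_\ell)$ already has $x^*$-capacity at least $1$ and the scaling provides slack. On intra-slice arcs I would set $z$ (again capped by $\tfrac{1}{1-3\tau}\,x^*$) so that $\kappa(z)$ has total weight $n-1$ and satisfies $\kappa(z)(E(S))\leq|S|-1$ for all $S\subseteq V$, leveraging the fact that $\kappa(x^*)$ already lies in the spanning tree polytope of $H$ and that the modifications are confined to narrow-cut boundaries.

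The chief obstacle, I expect, is Phase 1: making the uncrossing precise and arguing that the resulting chain of $s$-$t$ cuts, on which we enforce the equalities, \emph{dominates} all $\tau$-narrow cuts, so condition (c) holds for every $\tau$-narrow cut and not just those in the chain. Once this structural step is in place, Phase 2 is essentially a flow computation together with verification of spanning tree polytope membership; the $\tfrac{1}{1-3\tau}$ factor should absorb three sources of multiplicative slack, each of order $\tau$: adjusting forward flow to be exactly $1$ on each narrow cut, rerouting to eliminate backward flow, and preserving the spanning tree polytope constraints after these local edits.
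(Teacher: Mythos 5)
Your overall outline (chain of narrow cuts, unit forward flow across each narrow cut, intra-slice filling) is in the right spirit, but Phase~1 as you've stated it has a real gap, and the gap is exactly what the paper's argument is designed to avoid.

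You propose to show that crossing narrow cuts can be \emph{uncrossed} into $\{U\cap U', U\cup U'\}$ using submodularity of $f(U) = x^*(\partial^+(U))$. The submodular inequality $f(U\cap U') + f(U\cup U') \le f(U) + f(U') < 2+2\tau$ together with $f \ge 1$ on $s$-$t$ cuts only tells you that $U\cap U'$ and $U\cup U'$ are $(2\tau)$-narrow; it gives no contradiction, and iterating the uncrossing degrades the narrowness parameter (to $4\tau$, $8\tau$, \dots), so the chain you end up with need not consist of $\tau$-narrow cuts and need not contain (or dominate) the original family. You flag ``making the uncrossing precise'' and ``arguing that the resulting chain dominates all $\tau$-narrow cuts'' as the chief obstacle, and indeed that step is unrepaired. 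The paper sidesteps this entirely: it proves that for $\tau \le 1/4$, \emph{no two $\tau$-narrow $s$-$t$ cuts can cross at all} (Lemma~\ref{lem:nocross}), so the narrow cuts automatically form a chain and no uncrossing is needed. The key is a three-way (not two-way) decomposition: writing $x^*(\partial^+(U)) + x^*(\partial^+(W))$ in terms of $\partial^+(U\setminus W)$, $\partial^+(W\setminus U)$, and $\partial^+(U\cap W)$, each of which has $x^*$-value at least $1$, minus a correction term that is at most $x^*(\partial^-(U)) + x^*(\partial^-(W)) < 2\tau$. This gives $2+2\tau > 3-2\tau$, contradicting $\tau \le 1/4$. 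Plain submodularity (two parts) cannot yield this contradiction, which is why your uncrossing route stalls.

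Phase~2 also deviates from what makes the argument go through. You aim to keep $\kappa(z)(E) = n-1$ and verify the forest-side inequalities $\kappa(z)(E(S)) \le |S|-1$, but the construction necessarily \emph{increases} weight on intra-slice edges (the paper scales by $\frac{1}{1-2\tau}$ inside each $L_i$) and on the adjacent-layer arcs (normalizing so that $z(\partial(L_i;L_{i+1})) = 1$, justified by Lemma~\ref{lem:li+} which gives $x^*(\partial(L_i;L_{i+1})) \ge 1-3\tau$ and hence the $\frac{1}{1-3\tau}$ bound). After this scaling the total weight exceeds $n-1$ in general, and the paper instead works with the \emph{connected-subgraph} side of the polytope, $\kappa(z)(\partial(\pi)) \ge |\pi|-1$ for all partitions $\pi$ (via Lemma~\ref{lem:parts} and Corollary~\ref{cor:parts}), decomposes $\kappa(z)$ as a convex combination of spanning connected subgraphs, and only afterwards drops edges to get spanning trees. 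Finally, your flow on ``forward inter-slice arcs'' allows arcs that skip layers; the paper's explicit $z$ is supported only on intra-layer edges and on arcs between \emph{consecutive} layers, zeroing out everything else, which is what makes $z(\partial^+(U_i)) = 1$ and $z(\partial^-(U_i)) = 0$ fall out by construction. So while your accounting ($n-1$) would work if the flow were forced onto consecutive-layer arcs, as written the plan does not enforce this, and the spanning-tree-polytope verification you sketch does not match the inequalities you would actually need to check.
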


Before we prove the lemma (in \lref[Section]{sec:proof-of-lem}), let us
sketch how it will be useful to get a cheap ATSPP solution. Since $z$ (or more
correctly, its undirected version $\kappa(z)$) lies in the spanning tree
polytope, it can be represented as a convex combination of spanning
trees.  Using some recently-developed algorithms (e.g., those due
to~\cite{AGMSS,CVZ10}) one can choose a (random) spanning tree that crosses
each cut only $\factor$ times more than the LP solution. Finally, we can
use $\factor$ times the LP solution to patch this tree to get an $s$-$t$
path. Since the LP solution is ``weak'' on the narrow cuts and may
contribute very little to this patching (at most $\tau$), it is crucial
that by property~(c) above, this tree will cross the narrow cuts
\emph{only once}, and that too, it crosses in the ``right'' direction,
so we never need to use the LP when verifying the cut conditions of
Hoffman's circulation theorem on narrow cuts. The details of these
operations appear in \lref[Section]{sec:patching}.

We will assume $n \geq 7$ to ensure all of our arguments work. For $n \leq 6$,
we use the known integrality gap bound of $2 \lfloor \log_2 n \rfloor + 1 \leq 5$
from~\cite{FSS10} to ensure the gap is bounded for all $n \geq 2$.

\subsection{The Structure of Narrow Cuts}
\label{sec:proof-of-lem}

We now prove \lref[Lemma]{lem:fix_sol}: it says that we can take the LP
solution $x^*$ and find another vector $z$ such that if an $s$-$t$ cut is
narrow in $x^*$ (i.e. $x^*(\partial^+(U)) < 1 + \tau$), then $z(\partial^+(U)) = 1$. Moreover,
the undirected version of $z$ can be written as a convex combination of
spanning trees, and $z_a$ is not much larger than $x^*_a$ for any arc $a$.

The undirected version of $x^*$ itself can be written as a
convex combination of spanning trees, so if we force $z$ to cross the
narrow cuts to an extent less than $x^*$ (loosely, this reduces the
connectivity), we had better increase the value on other arcs. To show we
can perform this operation without changing any of the coordinates by
very much, we need to study the structure of narrow cuts more closely.
(Such a study is done in the \emph{symmetric} TSP path paper of An et
al.~\cite{AKS12}, but our goals and theorems are somewhat different.)

First, say two $s$-$t$ cuts $U$ and $W$ \emph{cross} if $U\setminus W$
and $W \setminus U$ are
non-empty. 

\begin{lemma}
  \label{lem:nocross}
  For $\tau \leq 1/4$, no two $\tau$-narrow $s$-$t$ cuts cross.
\end{lemma}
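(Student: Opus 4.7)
The plan is to argue by contradiction. Suppose $U$ and $W$ are two crossing $\tau$-narrow $s$-$t$ cuts, and partition $V$ into the four pieces $A := U \cap W$, $B := U \setminus W$, $C := W \setminus U$, and $D := V \setminus (U \cup W)$. All four are nonempty, with $s \in A$, $t \in D$, and $B, C \subseteq V \setminus \{s,t\}$. The strategy is a double-counting argument comparing the $x^*$-mass crossing the four ``piece'' cuts $A, B, C, U \cup W$ against that crossing the two original cuts $U$ and $W$.

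First I would collect LP lower bounds on $\Sigma_4 := \sum_{X \in \{A, B, C, U \cup W\}} \bigl[x^*(\partial^+(X)) + x^*(\partial^-(X))\bigr]$. Each of $A$ and $U \cup W$ is an $s$-$t$ cut, contributing $x^*(\partial^+) + x^*(\partial^-) \geq 1 + 0$; each of $B$, $C$ is a nonempty subset of $V \setminus \{s,t\}$, so combining the degree constraints with the subtour constraint applied to the complement gives $x^*(\partial^+) = x^*(\partial^-) \geq 1$, contributing at least $2$ each. Hence $\Sigma_4 \geq 6$. On the narrowness side, $x^*(\partial^+(U)) + x^*(\partial^-(U)) < (1+\tau) + \tau$ and likewise for $W$, so
\begin{equation*}
\Sigma_2 := \bigl[x^*(\partial^+(U)) + x^*(\partial^-(U))\bigr] + \bigl[x^*(\partial^+(W)) + x^*(\partial^-(W))\bigr] \;<\; 2 + 4\tau.
\end{equation*}

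The heart of the argument is to relate $\Sigma_2$ and $\Sigma_4$ by arc classification. Let $T$ denote the $x^*$-mass on the ``diagonal'' inter-part arcs $\partial(A,D) \cup \partial(D,A) \cup \partial(B,C) \cup \partial(C,B)$, and let $S$ denote the $x^*$-mass on the remaining inter-part arcs (those between adjacent parts of the $\{A,B,C,D\}$ partition). A short check over the endpoint patterns shows that every inter-part arc contributes $2$ to $\Sigma_4$, whereas in $\Sigma_2$ the diagonal arcs contribute $2$ while the adjacent inter-part arcs contribute only $1$. Thus $\Sigma_4 = 2(S+T) \geq 6$ and $\Sigma_2 = S + 2T < 2 + 4\tau$; subtracting gives $T < 4\tau - 1$, which is $\leq 0$ for $\tau \leq 1/4$ and contradicts $T \geq 0$.

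The only real obstacle is executing the arc-by-arc classification correctly: one must track, for each of the sixteen possible endpoint-location pairs, precisely which of $\partial^\pm(U), \partial^\pm(W), \partial^\pm(A), \partial^\pm(B), \partial^\pm(C), \partial^\pm(U \cup W)$ contain the arc. The final inequality has no slack, tightening exactly at $\tau = 1/4$, which matches the threshold in the statement.
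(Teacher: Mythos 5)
Your proof is correct and takes essentially the same approach as the paper's: partition $V$ into the four pieces $U\cap W$, $U\setminus W$, $W\setminus U$, $V\setminus(U\cup W)$, apply the subtour lower bounds to these pieces, and play them off against the narrowness of $U$ and $W$ via an arc-counting identity. The paper works directly with $x^*(\partial^+(U))+x^*(\partial^+(W))$ and a specific set-theoretic decomposition, whereas your symmetric $\Sigma_4$ versus $\Sigma_2$ accounting is slightly cleaner bookkeeping, but the mathematical content and the tight threshold $\tau=1/4$ are identical.
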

\begin{proof}
  Suppose $U$ and $W$ are crossing $\tau$-narrow $s$-$t$ cuts. Then
  \begin{eqnarray*}
    2 + 2\tau & > & x^*(\partial^+(U)) + x^*(\partial^+(W)) \\
    & = & x^*(\partial^+(U \setminus W)) + x^*(\partial^+(W \setminus U)) + x^*(\partial^+(U \cap W)) \\ & &+ x^*(\partial(U \cap W; V \setminus (U \cup W))) - x^*(\partial((U \cup W) \setminus (U \cap W) ; U \cap W)) \\
    & \geq & 1 + 1 + 1 + 0 - 2\tau \\
    & = & 3 - 2\tau
  \end{eqnarray*}
  where the last inequality follows from the first three terms being
  cuts excluding $t$ and hence having at least unit $x^*$-value crossing them (by the
  LP constraints), the fourth term being non-negative, and the last term
  being the $x^*$-value of a subset of the arcs in $\partial^-(U) \cup
  \partial^-(W)$ and remembering that $U$ and $W$ are $\tau$-narrow.
  However, this contradicts $\tau \leq 1/4$.
\end{proof}

\lref[Lemma]{lem:nocross} says that the $\tau$-narrow cuts form a chain
$\{s\} = U_1 \subset U_2 \subset \ldots \subset U_k = V\setminus\{t\}$ with $k \geq 2$.
For $1 < i \leq k$. let $L_i := U_i \setminus U_{i-1}$. We also define
$L_1 = \{s\}$ and $L_{k+1}=\{t\}$. Let $L_{\leq i} := \bigcup_{j = 1}^i L_i$ and $L_{\geq
  i} := \bigcup_{j=i}^{k+1} L_i$.
For the rest of this paper, we will use $\tau$ to denote a value in the range $[0, 1/4]$.
Ultimately, we will set $\tau := 1/4$ for the final bound but we state the lemmas
in their full generality for $\tau \leq 1/4$.

Next, we show that out of the (at most) $1+\tau$ mass of $x^*$ across
each $\tau$-narrow cut $U_i$, most of it comes from the ``local'' arcs
in $\partial(L_i; L_{i+1})$.

\begin{lemma} \label{lem:li+}
  For each $1 \leq i \leq k$; $x^*(\partial(L_i; L_{i+1})) \geq
  1-3\tau$. 
\end{lemma}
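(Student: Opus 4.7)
My plan is to fix $i \in \{1,\ldots,k\}$ and prove the bound $\alpha_i := x^*(\partial(L_i; L_{i+1})) \geq 1 - 3\tau$ by analyzing the LP mass entering $L_{i+1}$. The arcs in $\partial^-(L_{i+1})$ split into three groups according to the source layer: the \emph{local} arcs from $L_i$ (contributing exactly $\alpha_i$), the \emph{forward-skipping} arcs from $L_{\leq i-1}$ (with total mass $\gamma_i := x^*(\partial(L_{\leq i-1}; L_{i+1}))$), and the \emph{backward} arcs from $L_{\geq i+2}$ (with total mass $\eta_i := x^*(\partial(L_{\geq i+2}; L_{i+1}))$). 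Since $L_{i+1}$ either avoids $\{s,t\}$ or equals $\{t\}$, the LP constraints give $x^*(\partial^-(L_{i+1})) \geq 1$ (using flow conservation together with the cut constraint $x^*(\partial^+(V \setminus L_{i+1})) \geq 1$, or the $\partial^-(t)$-constraint directly when $i=k$). Thus $\alpha_i + \gamma_i + \eta_i \geq 1$, and it suffices to prove $\eta_i < \tau$ and $\gamma_i < 2\tau$.

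The bound on the backward mass is the easy step: every arc from $L_{\geq i+2}$ into $L_{i+1}$ starts outside $U_{i+1}$ and ends inside, so it lies in $\partial^-(U_{i+1})$. When $i \leq k-1$, narrowness of $U_{i+1}$ yields $\eta_i \leq x^*(\partial^-(U_{i+1})) < \tau$; the boundary case $i=k$ is trivial since $L_{\geq k+2} = \emptyset$.

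The main work is bounding $\gamma_i$. The arcs counted by $\gamma_i$ lie in $\partial^+(U_{i-1})$, which also contains arcs from $L_{\leq i-1}$ ending in $L_i$ and in $L_{\geq i+2}$, so
\[
\gamma_i \;\leq\; x^*(\partial^+(U_{i-1})) \;-\; x^*(\partial(L_{\leq i-1};\, L_i)).
\]
To lower-bound $x^*(\partial(L_{\leq i-1}; L_i))$, I would apply $x^*(\partial^-(L_i)) \geq 1$ and observe that the backward arcs into $L_i$ (from $L_{\geq i+1}$) all sit in $\partial^-(U_i)$, whose mass is less than $\tau$ by narrowness of $U_i$; hence $x^*(\partial(L_{\leq i-1}; L_i)) > 1 - \tau$. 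Combined with narrowness of $U_{i-1}$ (valid for $i \geq 2$, and trivially for $i=2$ since $U_1 = \{s\}$ gives $x^*(\partial^+(U_1)) = 1 < 1+\tau$), this yields $\gamma_i < (1+\tau) - (1-\tau) = 2\tau$. The boundary case $i=1$ is immediate, as $L_{\leq 0} = \emptyset$ forces $\gamma_1 = 0$.

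Combining the three estimates gives $\alpha_i > 1 - 2\tau - \tau = 1 - 3\tau$, as required. The main subtlety worth checking is the bookkeeping of the ``double-skipping'' arcs $\partial(L_{\leq i-1}; L_{\geq i+2})$: they appear in $\partial^+(U_{i-1})$ but contribute neither to $\gamma_i$ nor to $\partial^-(L_i)$, so discarding them only weakens the upper bound on $\gamma_i$ in the right direction. Everything else reduces to unpacking which cut each class of arcs crosses and applying the narrow-cut hypothesis.
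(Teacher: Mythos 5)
Your proof is correct: the identity $x^*(\partial^-(L_{i+1})) = \alpha_i + \gamma_i + \eta_i \geq 1$ is right, and the bounds $\eta_i < \tau$ (via narrowness of $U_{i+1}$) and $\gamma_i < 2\tau$ (via narrowness of $U_{i-1}$ combined with $x^*(\partial^-(L_i)) \geq 1$ and narrowness of $U_i$) together give $\alpha_i > 1 - 3\tau$, and the boundary cases $i=1$ and $i=k$ are handled properly. The paper's argument uses the same ingredients (narrowness of $U_{i-1}, U_i, U_{i+1}$ plus the degree constraints on $L_i$ and $L_{i+1}$) but packages them more symmetrically: it writes $x^*(\partial^+(L_i)) \geq 1$ and $x^*(\partial^-(L_{i+1})) \geq 1$ as $A+B \geq 1-\tau$ and $A+C \geq 1-\tau$, adds these, and then subtracts $A+B+C \leq x^*(\partial^+(U_i)) \leq 1+\tau$ to isolate $A$ in one stroke, whereas you bound the two error terms $\gamma_i$ and $\eta_i$ separately, with the $\gamma_i$ bound requiring its own sub-argument through $\partial^+(U_{i-1})$. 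The net effect is the same and your treatment of the double-skip arcs is appropriately conservative, so this is essentially the paper's proof with a slightly different arrangement of the same linear inequalities.
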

\begin{proof}
  If $k = 1$ then $\{s\} = U_1 = U_k = V \setminus \{t\}$ so in fact $V = \{s,t\}$. In this case, $L_1 = \{s\}$
  and $L_2 = \{t\}$ and the LP constraints clearly imply $\partial(L_1;L_2) = 1$.

  Now consider the case $k \geq 2$.
  For $i=1$, since $s,t \not\in L_2$ we have $x^*(\partial^-(L_2)) \geq 1$ from the LP
  constraints. We also have $x^*(\partial^-(U_2)) < \tau$ because $U_2$ is
  $\tau$-narrow, and therefore $x^*(\partial(L_1; L_2))\geq 1-\tau$. A similar argument for $i = k$ shows
  $x^*(\partial(L_k; L_{k+1})) \geq 1-\tau$. So it remains to consider
  $1 < i < k$.
  Define the following quantities, some of which can be zero.
  \begin{OneLiners}
  \item $A = x^*(\partial(L_i; L_{i+1}))$
  \item $B = x^*(\partial(L_i; L_{\geq i+2}))$
  \item $C = x^*(\partial(L_{\leq i-1}; L_{i+1}))$
  \end{OneLiners}
  We have
  \[ 1 \leq x^*(\partial^+(L_i)) = A+B+x^*(\partial(L_i; L_{\leq i-1})) \leq
  A+B + \tau,
  \] because $\partial(L_i; L_{\leq i-1}) \subseteq
  \partial^-(U_{i-1})$ and $U_{i-1}$ is $\tau$-narrow.  Similarly
  \[ 1 \leq x^*(\partial^-(L_{i+1})) = A+C+x^*(\partial(L_{\geq i+2};
  L_{i+1})) \leq A+C + \tau.
  \]
  Summing these two inequalities yields $2 \leq A + (A + B + C) + 2\tau
  \leq A + (1+\tau) + 2\tau$ where we have used $A+B+C \leq
  x^*(\partial^+(U_i)) \leq 1 + \tau$. Rearranging shows $A \geq 1-3\tau$.
\end{proof}



Now, recall that $\kappa(x^*)$ denotes the assignment of arc weights to
the graph $H = (V,E)$ from the previous section obtained by ``removing''
the directions from arcs in $A$.  We prove that the restriction of
$\kappa(x^*)$ to any $L_i$ almost satisfies the partition inequalities
that characterize the convex hull of connected spanning subgraphs of $H$. This characterization was given by Edmonds~\cite{Edmonds70b}; see also Chapter 50, Corollary 50.8(a) in Schrijver~\cite{Schrijver-book}. We state it here for completeness.

\begin{theorem}\cite{Edmonds70b}
Let $G=(V,E)$ be a graph. Then the convex hull of all connected spanning subgraphs of $G$ is given by ${\mathcal C}(G)=\{x\in \Re^E: x(\partial(\pi))\geq |\pi|-1 \;\;\; \forall \textrm{ partitions } \pi \textrm{ of  } V, \;\; 0\leq x\leq 1\}$. Moreover, the convex hull of spanning trees of $G$ is given by
${\mathcal C}(G)\cap \{x \in \Re^E : \sum_{e\in E} x_e= |V|-1\}$.
\end{theorem}


For a partition $\pi = \{W_1, \ldots, W_\ell\}$ of a subset of $V$, we let
$\partial(\pi)$ denote the set of edges whose endpoints lie in two different
sets in the partition. To be clear, $\partial(\pi)$ does not contain any edge that has at least one endpoint in $V-\cup_{i=1}^\ell W_i$.

\begin{lemma}\label{lem:parts}
For any $1 \leq i \leq k+1$ and any partition $\pi = \{W_1, \ldots, W_\ell\}$ of $L_i$, we have $\kappa(x^*)(\partial(\pi))\geq \ell-1-2\tau$.
\end{lemma}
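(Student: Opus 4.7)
My plan is to use the identity
\[ \kappa(x^*)(E[L_i]) = \kappa(x^*)(\partial(\pi)) + \sum_{j=1}^{\ell} \kappa(x^*)(E[W_j]), \]
lower bounding the left-hand side and upper bounding the inner sum on the right, so that the inequality pops out by subtraction.

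For the upper bound I first establish the ``subset inequality'' $\kappa(x^*)(E[W]) \le |W|-1$ for every nonempty $W \subsetneq V$; this is one of the standard descriptions of the spanning tree polytope, and I will derive it directly from the LP. Summing the degree equalities over $v \in W$ yields $x^*(A[W]) + x^*(\partial^+(W)) = |W| - [t \in W]$, so the inequality reduces to $x^*(\partial^+(W)) \ge 1 - [t \in W]$. This is trivial if $t \in W$; if $t \notin W$ and $s \in W$ it is an LP cut constraint; and if $t \notin W$ and $s \notin W$ one applies the LP cut constraint to $V \setminus W$ (which contains $s$) to obtain $x^*(\partial^-(W)) \ge 1$ and then transfers to $\partial^+(W)$ using the flow-balance identity $x^*(\partial^+(W)) = x^*(\partial^-(W))$. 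Applying this bound to each part $W_j \subseteq L_i \subseteq V \setminus \{s,t\}$ gives $\sum_j \kappa(x^*)(E[W_j]) \le |L_i| - \ell$.

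For the lower bound on $\kappa(x^*)(E[L_i])$, the cases $i = 1$ and $i = k+1$ are trivial since $L_i$ is a singleton, $\ell = 1$, and the claim reduces to $0 \ge -2\tau$. For $1 < i \le k$, summing $x^*(\partial^+(v)) = 1$ over $v \in L_i$ gives $\kappa(x^*)(E[L_i]) = x^*(A[L_i]) = |L_i| - x^*(\partial^+(L_i))$. Every arc leaving $L_i$ either enters $U_{i-1}$---contributing to $\partial^-(U_{i-1})$, whose total $x^*$-value is less than $\tau$ by the narrowness of $U_{i-1}$---or enters $L_{\ge i+1}$---contributing to $\partial^+(U_i)$, whose total $x^*$-value is less than $1+\tau$ by the narrowness of $U_i$. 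Hence $x^*(\partial^+(L_i)) < 1 + 2\tau$, so $\kappa(x^*)(E[L_i]) > |L_i| - 1 - 2\tau$.

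Combining the two bounds gives $\kappa(x^*)(\partial(\pi)) > (|L_i| - 1 - 2\tau) - (|L_i|-\ell) = \ell - 1 - 2\tau$. I expect the main subtlety to be the subset inequality in the case $s,t \notin W$: here the LP has no direct constraint on $x^*(\partial^+(W))$, and one must route through the cut constraint on the complement $V \setminus W$ together with the balance identity. Once this is in place, the remainder of the argument is straightforward bookkeeping using the narrowness of $U_{i-1}$ and $U_i$.
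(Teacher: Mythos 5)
Your proof is correct and amounts to a linear rearrangement of the paper's argument: substituting the degree identity $x^*(A[W]) + x^*(\partial^+(W)) = |W|$ (for $s,t\notin W$) into the paper's double-counted quantity $X = \sum_j\bigl[x^*(\partial^+(W_j)) + x^*(\partial^-(W_j))\bigr]$ recovers exactly your decomposition $\kappa(x^*)(E[L_i]) = \kappa(x^*)(\partial(\pi)) + \sum_j \kappa(x^*)(E[W_j])$ together with your lower bound on $\kappa(x^*)(E[L_i])$ and your ``subset inequality'' upper bound on the $\kappa(x^*)(E[W_j])$. Both proofs feed in precisely the same LP inputs---the degree equalities, the cut constraints on each $W_j$, and the $\tau$-narrowness of $U_{i-1}$ and $U_i$ to bound the mass leaving $L_i$---so this is essentially the paper's proof, phrased in the language of spanning-tree-polytope subset inequalities rather than direct cut double-counting.
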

\begin{proof}
Since $L_1 = \{s\}$ and $L_{k+1} = \{t\}$, there is nothing to prove for $i = 1$ or $i = k+1$. So, we suppose $1 < i < k+1$.

Consider the quantity $\alpha = \sum_{j=1}^\ell x^*(\partial^+(W_j)) + x^*(\partial^-(W_j))$.
On one hand $x^*(\partial^+(W_j)) = x^*(\partial^-(W_j)) \geq 1$ because neither $s$ nor $t$ lie in $W_j$ for any $1 \leq j \leq \ell$, so
$\alpha \geq 2\ell$. On the other hand, $\alpha$ counts each arc between two parts in $\pi$ exactly twice and each
arc with one end in $L_i$ and the other not in $L_i$ precisely once. So, $\alpha = 2\kappa(x^*)(\partial(\pi)) + x^*(\partial^+(L_i)) + x^*(\partial^-(L_i))$.

Notice that $\partial^+(L_i)$ and $\partial^-(L_i)$ are disjoint subsets of $\partial^+(U_{i-1}) \cup \partial^-(U_{i-1}) \cup \partial^+(U_i) \cup \partial^-(U_i)$.
So, since both $U_{i-1}$ and $U_i$ are $\tau$-narrow, $x(\partial^+(L_i)) + x(\partial^-(L_i)) < 2+4\tau$.
This shows $2\ell \leq \alpha < 2\kappa(x^*)(\partial(\pi)) + 2 + 4\tau$ which, after rearranging, is what we wanted to show.
\end{proof}

\begin{corollary}\label{cor:parts}
  For any partition $\pi$ of $L_i$, we have
  $\frac{\kappa(x^*)(\partial(\pi))}{1-2\tau}\geq |\pi|-1$.
\end{corollary}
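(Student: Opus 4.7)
The plan is to derive this corollary directly from \lref[Lemma]{lem:parts}, with essentially no new work. Writing $\ell = |\pi|$, \lref[Lemma]{lem:parts} supplies $\kappa(x^*)(\partial(\pi)) \geq \ell - 1 - 2\tau$, while the corollary asks for $\kappa(x^*)(\partial(\pi)) \geq (\ell - 1)(1 - 2\tau)$ (after clearing the positive denominator $1 - 2\tau$, which is positive because $\tau \leq 1/4$). So the entire proof reduces to checking the elementary inequality $\ell - 1 - 2\tau \geq (\ell - 1)(1 - 2\tau)$.

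The trivial case is $\ell = 1$, where $\partial(\pi) = \emptyset$ and both sides are $0$. For $\ell \geq 2$, I would rearrange $(\ell-1)(1-2\tau) = (\ell - 1) - 2\tau(\ell - 1)$, so the desired inequality becomes $-2\tau \geq -2\tau(\ell-1)$, i.e., $2\tau(\ell - 1) \geq 2\tau$, which holds because $\ell - 1 \geq 1$ and $\tau \geq 0$. Combining this with the lemma and dividing through by $1 - 2\tau > 0$ gives exactly $\frac{\kappa(x^*)(\partial(\pi))}{1-2\tau} \geq \ell - 1$.

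There is no real obstacle here: the only thing to be careful about is that $1 - 2\tau > 0$ so that dividing preserves the inequality, which is guaranteed by the standing assumption $\tau \in [0,1/4]$. The corollary's slightly weaker form (with the factor of $1-2\tau$ in the denominator rather than an additive $-2\tau$) is presumably the form needed downstream to feed into the partition/connectivity inequalities describing the spanning tree polytope, where scaling by $\frac{1}{1-2\tau}$ is more convenient than subtracting a constant.
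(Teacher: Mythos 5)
Your proof is correct and is essentially identical to the paper's: both invoke Lemma~\ref{lem:parts} to get $\kappa(x^*)(\partial(\pi)) \geq |\pi|-1-2\tau$ and then observe that $\frac{|\pi|-1-2\tau}{1-2\tau} \geq |\pi|-1$ whenever $|\pi|\geq 2$ (the case $|\pi|=1$ being trivial). Your write-up just spells out the elementary rearrangement more explicitly than the paper does.
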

\begin{proof}
From \lref[Lemma]{lem:parts}, we have $\frac{\kappa(x^*)(\partial(\pi))}{1-2\tau} \geq \frac{|\pi|-1-2\tau}{1-2\tau} \geq |\pi|-1$ for any $|\pi| \geq 2$.
\end{proof}


Finally, to efficiently implement the arguments in the proof of \lref[Lemma]{lem:fix_sol}, we need to be able to efficiently find
all $\tau$-narrow cuts $U_i$.
This is done by a standard recursive algorithm that exploits the fact that the cuts are nested.
\begin{lemma} \label{lem:efficient}
There is a polynomial-time algorithm to find all $\tau$-narrow $s-t$ cuts.
\end{lemma}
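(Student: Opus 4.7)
The plan is to exploit the nested chain structure established in \lref[Lemma]{lem:nocross}, which reduces the enumeration of all $\tau$-narrow cuts to a polynomial number of min-cut computations. Since the $\tau$-narrow $s$-$t$ cuts form a chain $\{s\} = U_1 \subsetneq \cdots \subsetneq U_k = V\setminus\{t\}$, there are at most $n-1$ of them, and it suffices to identify each $U_i$ in turn.

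The core subroutine is a standard min directed cut computation. For an ordered pair $(u,v) \in V \times V$, let $h(u,v)$ be the minimum of $x^*(\partial^+(W))$ over all $s$-$t$ cuts $W$ with $u \in W$ and $v \notin W$; this is computable in polynomial time by contracting $\{s,u\}$ into a super-source and $\{v,t\}$ into a super-sink and running any standard directed max-flow/min-cut algorithm with arc capacities $x^*$, which also returns the actual cut attaining the value. The critical characterization is:
\[
h(u,v) < 1+\tau \quad \Longleftrightarrow \quad \text{some $\tau$-narrow cut contains $u$ but not $v$.}
\]
The forward implication holds because the min cut certifying $h(u,v) < 1+\tau$ is itself a $\tau$-narrow cut; the converse is immediate because any such narrow cut is feasible for the min-cut instance.

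Combining this with the chain structure recovers every narrow cut. For any $v \in V \setminus \{t\}$ lying in layer $L_i = U_i \setminus U_{i-1}$, I claim that
\[
U_v \;:=\; \{u \in V : h(v,u) \geq 1+\tau\} \;=\; U_i.
\]
Indeed, $u \in U_i$ iff $u \in L_j$ for some $j \leq i$, which by the chain ordering is equivalent to saying every narrow cut containing $v$ also contains $u$, which by the characterization above is equivalent to $h(v,u) \geq 1+\tau$. So for each $v \in V \setminus \{t\}$ we compute $h(v,u)$ for every other $u$, form the set $U_v$, and collect the distinct sets obtained as $v$ ranges over $V \setminus \{t\}$; this produces exactly the list $\{U_1,\ldots,U_k\}$. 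The total work is $O(n^2)$ max-flow computations, which is polynomial.

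The main (mild) obstacle is verifying the two-way equivalence for $h$ and the identity $U_v = U_i$ from the nested structure; both follow directly from \lref[Lemma]{lem:nocross} via a short case analysis on layer indices, and no new combinatorial ideas beyond max-flow/min-cut and the chain of narrow cuts are needed.
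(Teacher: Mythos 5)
Your proof is correct, but it takes a genuinely different (and in fact cleaner) route than the paper. The paper also begins from the nested chain structure of \lref[Lemma]{lem:nocross}, but then builds a recursive contraction algorithm: it searches over pairs $(u,v)$, contracts $\{s',u\}$ and $\{t',v\}$, checks whether the resulting min cut is narrow, and if so recurses on the two sides of the cut; correctness and the polynomial bound on the number of recursive calls both require a short but non-trivial argument that non-trivial narrow cuts correspond correctly to narrow cuts in the contracted subgraphs and that their count bounds the recursion. Your approach avoids recursion entirely: you define $h(u,v)$ as the minimum $x^*$-capacity of an $s$-$t$ cut separating $u$ from $v$ (a single max-flow after contraction), observe that $h(u,v) < 1+\tau$ iff some narrow cut separates $u$ from $v$, and then exploit the chain structure to show that for each $v \in V\setminus\{t\}$, the set $U_v := \{u : h(v,u)\geq 1+\tau\}$ is exactly the smallest narrow cut $U_i$ containing $v$. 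Since every layer $L_i$, $1 \leq i \leq k$, is nonempty, ranging over $v$ recovers the entire chain. This is a direct, non-recursive characterization, uses $O(n^2)$ max-flow computations (vs.\ roughly $O(n^3)$ in the paper), and its correctness argument is a one-line index comparison. The only small gap worth noting is the degenerate case $u=s$: the max-flow instance obtained by contracting $\{s,v\}$ and $\{s,t\}$ is ill-posed, but since all $s$-$t$ cuts contain $s$, one should simply set $h(v,s)=\infty$ (equivalently, declare $s \in U_v$) by convention; with that said, the argument is complete.
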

\begin{proof}
Consider following recursive algorithm.
As input, the routine is given a directed graph $H = (V',A')$ with arc weights $x^*_a$
and distinct nodes $s',t'$  where both $\{s'\}$ and $V'\setminus \{t'\}$ are $\tau$-narrow.
Say a $\tau$-narrow cut $U$ in $H$ is non-trivial if $U \neq \{s'\}$ and $U \neq V' \setminus \{t'\}$.
The claim is that the procedure will find all non-trivial $\tau$-narrow $s-t$ cuts of $H$, provided that they are nested.

The procedure works as follows.
If there are non-trivial $\tau$-narrow $s-t$ cuts in $H$,
then there are nodes $u,v \in V' \setminus\{s',t'\}$ such that
some $\tau$-narrow $s'-t'$ cut $U$ has $\{s',u\} \subseteq U \subseteq V' \setminus \{t',v\}$.
So, the procedure tries all $O(|V'|^2)$ pairs of distinct nodes $u,v$, contracts both $\{s',u\}$ and $\{t',v\}$
to a single node and determines if the minimum cut separating these contracted nodes
has $x^*$-capacity less than $1+\tau$. If such a cut $U$ was found for some $u,v$, the algorithm makes two recursive calls,
one with the contracted graph $H[V'/U]$ with start node being the contraction of $U$ and end node being $t'$,
and the other with the contracted graph $H[V'/(V' \setminus U)]$ with start node $s'$ and end node being the contraction of $V'\setminus U$.
After both recursive calls complete, the algorithm returns all $\tau$-narrow
cuts found by these two recursive calls (of course, after expanding the contracted nodes)
and the $\tau$-narrow cut $U$ itself. If such a cut $U$ was not found over all choices of $u,v$, then the algorithm returns
nothing because there are no non-trivial $\tau$-narrow $s'-t'$ cuts in $H$.

It is easy to see that a non-trivial $\tau$-narrow cut in either contracted graph corresponds to a $\tau$-narrow cut in $H$.
On the other hand, if the $\tau$-narrow $s'-t'$ cuts are nested in $H$, then every non-trivial $\tau$-narrow $s'-t'$
cut apart from $U$ itself corresponds to a non-trivial $\tau$-narrow cut in exactly one of $H[V'/U]$ or
$H[V'/(V'\setminus U)]$. Also, the $\tau$-narrow cuts in both contracted graphs remain nested.
So, the recursive procedure finds all non-trivial $\tau$-narrow cuts of $H$. The number of recursive calls
is at most the number of non-trivial $\tau$-narrow cuts, and this is at most $|V'|$ because the cuts are nested
so it is an efficient algorithm. We call this algorithm initially with graph $G$, start node $s$ and end node $t$.
\lref[Lemma]{lem:nocross} implies the $\tau$-narrow $s-t$ cuts of $G$ are nested
so the recursive procedure finds all non-trivial $\tau$-narrow cuts of $G$. Adding these to
$\{s\}$ and $V\setminus\{t\}$ gives all $\tau$-narrow cuts of $G$.
\end{proof}

\begin{proof}[Proof of Lemma \ref{lem:fix_sol}]
  The claimed vector $z$ can be described by linear constraints: indeed,
  consider the following polytope on the variables $z$.
  \begin{align}
    \kappa(z)(\partial(\pi)) & \geq  |\pi|-1 &\forall {\rm ~partitions~}
    \pi {\rm ~of~ }V \label{cons1}  \\
    \sum_a z_a & = n-1 & \label{cons1-1} \\
    z_a & \leq  {\textstyle \frac{1}{1-3\tau}} \, x^*_a & \forall~ a \in
    A \label{cons2}\\
    z(\partial^+(U_i)) & =  1 & \forall~ \tau\text{-narrow~}\text{$s$-$t$}{\rm
      ~cuts~} U_i \label{cons3}\\
    z(\partial^-(U_i)) & =  0 & \forall~ \tau\text{-narrow~}\text{$s$-$t$}{\rm ~cuts~}
    U_i\label{cons4}\\
    z_a&\geq 0  &\forall~a\in A \label{cons5}
\end{align}

Consider the vector $z$ given as follows.
\begin{equation}
\label{eq:7}
z_{a} = \begin{cases}
\frac{x_a^*}{x^*(\partial(L_i;L_{i+1}))} & \text{if $a \in \partial(L_i;L_{i+1})$ for some $i$;}\\
\frac{x_a^*}{1-2\tau} & \text{if $a\in A(L_i)$ for some $i$;}\\
0 & \text{otherwise.}
\end{cases}
\end{equation}

Constraints~(\ref{cons4}) and (\ref{cons5}) are satisfied by
construction. Constraint~(\ref{cons2}) follows from \lref[Lemma]{lem:li+}
for edges in $\partial(L_i;L_{i+1})$ and by construction for rest of the
edges. For constraint~(\ref{cons3}), note that
\[
z(\partial^+(U_i))=z(\partial(L_i;L_{i+1}))+z(\partial^+(U_i)\setminus
\partial(L_i;L_{i+1}))=\frac{x^*(\partial(L_i;L_{i+1}))}{x^*(\partial(L_i;L_{i+1}))}+0=1.
\]

Next we show Constraints~(\ref{cons1}) holds.  It suffices to
show that $\kappa(z)$ can be decomposed as a convex combination of characteristic
vectors of connected graphs. For
$1 \leq i \leq k+1$, let $z^i$ denote the restriction of $\kappa(z)$ to
edges whose endpoints are both contained in $L_i$. Then
\lref[Corollary]{cor:parts}, Constraints~(\ref{cons5}), and
\cite[Corollary~50.8a]{Schrijver-book} imply that $z^i$ can be
decomposed as a convex combination of integral vectors, each of which
corresponds to an edge set that is connected on $L_i$.  Next, let $z'$
denote the restriction of $\kappa(z)$ to edges whose endpoints are both
contained in some common $L_i$. Since the sets $E(L_1), \ldots,
E(L_{k+1})$ are disjoint, we have that $z' = \sum_i z^i$ (where the
addition is component-wise). Furthermore,
$z'$, being the sum of the $z^i$ vectors,
can be decomposed as a convex combination of integral vectors
corresponding to edge sets $E'$ such that the connected components of
the graph $H' = (V, E')$ are precisely the sets
$\{L_i\}_{i=1}^{k+1}$.

Next, let $z''$ denote the restriction of $\kappa(z)$ to edges contained
in one such $\partial(L_i; L_{i+1})$.  We also note that the sets
$\partial(L_1; L_2), \ldots, \partial(L_k; L_{k+1})$ are disjoint.  By
construction, we have $z''(\partial(L_i; L_{i+1})) = 1$ for each $1 \leq
i \leq k$ so we may decompose $z''$ as a convex-combination of integral
vectors, each of which includes precisely one edge across each
$\partial(L_i; L_{i+1})$.

Adding any integral point $y'$ in the decomposition of $z'$ to any
integral point $y''$ in the decomposition of $z''$ results
in an integral vector that corresponds to a connected graph:
each $L_i$ is connected by $y'$ and consecutive $L_i$ are connected
by $y''$. By construction of $z$, we have $\kappa(z) = z' + z''$
so we may write $z$ as a convex combination of characteristic vectors
of connected graphs, each of which satisfies Constraints~(\ref{cons1}).

Finally, we modify $z$ slightly to ensure constraint (\ref{cons1-1}) holds while maintaining
the other constraints. From~\cite[Corollary~50.8a]{Schrijver-book} and
Constraints (\ref{cons1}) and (\ref{cons5}),
$\kappa(z)$ lies in the convex hull of incidence vectors corresponding to connected
(multi)graphs. Decompose $\kappa(z)$ into a convex combination of such vectors,
drop edges from the corresponding connected graphs to get spanning tree,
and recombine these spanning trees to get a point in the spanning
tree polytope. Note that we only decreased $z_a$ values
so Constraints (\ref{cons2}) and (\ref{cons4}) continue to hold.
Finally, since $\kappa(z)$ now lies in the spanning tree polytope
then each tree must still cross each narrow cut, so Constraints (\ref{cons3}) still hold.

Such a vector can be found efficiently because Constraints (\ref{cons1}) admit
an efficient separation oracle~\cite[Corollary~51.3b]{Schrijver-book}.
\end{proof}

\section{Obtaining an $s$-$t$ Path}
\label{sec:patching}

Having transformed the optimal LP solution $x^*$ into the new vector $z$
(as in \lref[Lemma]{lem:fix_sol}) without increasing it too much in any
coordinate, we now sample a random tree such that it has a small total
cost, and that the tree does not cross any cut much more than prescribed
by $x^*$. Finally we add some arcs to this tree (without increasing its
cost much) so that every $v \not\in \{s,t\}$ has equal indegree and outdegree
while ensuring that $s$ has outdegree 1 and indegree 0.
This gives us an Eulerian $s$-$t$ walk.

By the triangle inequality,
shortcutting this walk past repeated nodes yields a Hamiltonian $s-t$ path
of no greater cost. While this general approach is similar
to that used in~\cite{AGMSS}, some new ideas are required because we are
working with the LP for ATSPP---in particular, only one unit of flow is
guaranteed to cross $s$-$t$ cuts, which is why we needed to deal with
narrow cuts in the first place. The details appear in the rest of this section.

\subsection{Sampling a Tree}\label{sec:sample}

For a digraph $G = (V,A)$ and a collection of arcs $\A \subseteq A$, we say $\A$ is
\emph{$\alpha$-thin with respect to $x^*$} if $|\A
\cap \partial^+(U)| \leq \alpha x^*(\partial^+(U))$ for every $\emptyset
\subsetneq U \subsetneq V$. The set $\A$ is also
\emph{$\beta$-approximate with respect to $x^*$} if the total cost of
all arcs in $\A$ is at most $\beta$ times the cost of $x^*$, i.e.,
$\sum_{a \in \A} c_a \leq \beta \sum_{a \in A} c_a x_a^*$.  The reason
we are deviating from the undirected setting used in~\cite{AGMSS} to the directed setting is that the
orientation of the arcs across each $\tau$-narrow cut will be important
when we sample a random ``tree''.

\begin{lemma} \label{lem:findtree}
  Let $\tau \in [0, 1/4]$. Let $\beta = \frac{3}{1-3\tau}$ and $\alpha =
  \left(2 + \frac{1}{\tau}\right)\cdot \frac{24 \log n}{\log\log n}$.
  For $n \geq 7$, there is a randomized,
  polynomial time algorithm that, with probability at least 1/2, finds
  an $\alpha$-thin and $\beta$-approximate (with respect to $x^*$)
  collection of arcs $\A$ that is weakly connected and satisfies $|\A
  \cap (\partial^+(U))| = 1$ and $|\A \cap (\partial^-(U))| = 0$ for each
  $\tau$-narrow $s$-$t$ cut $U$.
\end{lemma}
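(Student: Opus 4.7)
The plan is to apply Lemma~\ref{lem:fix_sol} with the given $\tau$ to obtain the vector $z$, and then sample $\A$ as a random spanning tree drawn from a distribution whose marginals equal $\kappa(z)$. I would use the maximum-entropy spanning-tree distribution of Asadpour et al.~\cite{AGMSS} (or the swap-rounding procedure of~\cite{CVZ10}) to produce a random spanning tree $T$ of $H$ with $\Exp[\chi_T] = \kappa(z)$ and whose edge indicators are negatively correlated, so that a Chernoff-style tail bound applies. Because each edge of $H$ is identified with a unique arc of $A$, the sampled $T$ yields a set $\A \subseteq A$ of directed arcs, and $\A$ is automatically weakly connected.

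The narrow-cut conclusions hold \emph{deterministically}. For any $\tau$-narrow $s$-$t$ cut $U$, Lemma~\ref{lem:fix_sol} gives $\kappa(z)(\partial(U)) = z(\partial^+(U)) + z(\partial^-(U)) = 1$; since every spanning tree crosses $\partial(U)$ at least once but the expected crossing count equals $1$, every tree in the support must cross $\partial(U)$ exactly once. Combined with $z(\partial^-(U)) = 0$, this forces the crossing edge to correspond to an arc of $\partial^+(U)$, so $|\A \cap \partial^+(U)|=1$ and $|\A \cap \partial^-(U)|=0$ almost surely.

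For thinness on all remaining cuts $\emptyset \subsetneq U \subsetneq V$, I would apply the Chernoff bound for negatively correlated Bernoullis. If $U$ is not an $s$-$t$ cut then $x^*(\partial^+(U)) \geq 1$, and if $U$ is an $s$-$t$ cut that is not $\tau$-narrow then $x^*(\partial^+(U)) \geq 1+\tau$. In either case $\Exp[|\A \cap \partial^+(U)|] = z(\partial^+(U)) \leq \tfrac{1}{1-3\tau}\, x^*(\partial^+(U))$. Using $\tfrac{1}{1-3\tau} \leq 2 + \tfrac{1}{\tau}$, valid for $\tau \in (0, 1/4]$, setting the threshold $a := \alpha\, x^*(\partial^+(U))$ with $\alpha = (2+\tfrac{1}{\tau}) \cdot \tfrac{24\log n}{\log\log n}$ gives a ratio $a/\Exp[|\A \cap \partial^+(U)|] \geq \tfrac{24\log n}{\log\log n}$ uniformly, and the Chernoff tail $(e\,\Exp/a)^a$ collapses to $n^{-\Omega(1)}$. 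A union bound over the $\mathrm{poly}(n)$ distinct near-minimum cuts (using Karger's cut-counting bound as in~\cite{AGMSS}) keeps the total thinness failure below $1/6$. For the cost, linearity gives $\Exp[c(\A)] = c \cdot \kappa(z) \leq \tfrac{1}{1-3\tau}\, c\cdot x^*$, and Markov's inequality yields $c(\A) \leq \beta\cdot c\cdot x^*$ with probability at least $2/3$; union-bounding the thinness and cost failures yields success probability at least $1/2$.

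The main obstacle is the Chernoff analysis across all scales of $x^*(\partial^+(U))$. For cuts where $x^*(\partial^+(U))$ is only slightly above $1+\tau$, both the expected count and the target threshold $\alpha\, x^*(\partial^+(U))$ are small, so one has to verify that the chosen $\alpha$ leaves enough multiplicative slack for the tail bound to give $n^{-\Omega(1)}$ failure probability. The particular form $2+1/\tau$ is chosen precisely so that $\alpha(1-3\tau) \geq 24\log n/\log\log n$ for all admissible $\tau$, which is exactly what is needed to absorb the $1/(1-3\tau)$ loss introduced by Lemma~\ref{lem:fix_sol} and still push the Chernoff bound below $1/n^{\Omega(1)}$ uniformly.
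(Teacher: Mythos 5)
Your overall scaffolding matches the paper: apply Lemma~\ref{lem:fix_sol} to get $z$, sample a spanning tree of $H$ with marginals $\kappa(z)$ and negative correlation (swap rounding or max-entropy), argue the narrow-cut conclusions deterministically from $z(\partial^+(U))=1$ and $z(\partial^-(U))=0$, and use Markov for the cost. That part is correct and is exactly what the paper does.

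However, your $\alpha$-thinness argument has a genuine gap. You propose applying the Chernoff bound directly to the \emph{directed} quantity $|\A \cap \partial^+(U)|$ with mean $z(\partial^+(U))$, but this mean can be tiny or even $0$: for a $t$-$s$ cut $U$ (one containing $t$ but not $s$), $x^*(\partial^+(U)) = x^*(\partial^-(U)) - 1$ can be anywhere in $[0,\infty)$, so your claim ``if $U$ is not an $s$-$t$ cut then $x^*(\partial^+(U)) \geq 1$'' is false and omits precisely the cuts that need care. When the mean and threshold both shrink toward $0$, the tail $(e\mu/a)^a$ does not become small. Moreover, your union bound ``over the $\mathrm{poly}(n)$ near-minimum cuts'' leans on Karger's cut-counting theorem, which counts \emph{undirected} cuts by their $z'$-weight; it says nothing directly about how many vertex sets $U$ have small $z(\partial^+(U))$ (a set $U$ can have small directed crossing but large undirected crossing). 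The paper avoids both problems by a two-step argument: (i) apply Chernoff to the undirected count $|\partial_\mathcal{T}(U)|$ against $z'(\partial(U))$, which is always at least $1$ because $z'$ lies in the spanning-tree polytope, so Karger's counting applies cleanly; (ii) then do a purely deterministic case analysis to convert $\beta z'(\partial(U))$ into $\alpha x^*(\partial^+(U))$, separately handling $t$-$s$ cuts with $x^*(\partial^+(U)) < \tau$ (where $z(\partial^+(U))=0$ gives $|\partial^+_\A(U)|=0$), $t$-$s$ cuts with $x^*(\partial^+(U)) \geq \tau$ (where $z'(\partial(U)) \leq 4(2x^*(\partial^+(U))+1) \leq 4(2+1/\tau)x^*(\partial^+(U))$), and all remaining cuts. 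This is also the true origin of the factor $2+1/\tau$: it comes from $x^*(\partial^-(U)) = x^*(\partial^+(U)) + 1 \leq (1+1/\tau)x^*(\partial^+(U))$ in the $t$-$s$ case, not from the inequality $1/(1-3\tau) \leq 2+1/\tau$ as you suggest. Your write-up needs the pass from directed to undirected Chernoff, the split by whether $x^*(\partial^+(U))$ is below or above $\tau$ for $t$-$s$ cuts, and the correct accounting for the $2+1/\tau$ factor.
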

\begin{proof}
  Let $z$ be a vector as promised by \lref[Lemma]{lem:fix_sol}. From
  $\kappa(z)$, randomly sample a set of arcs $\A$ whose undirected
  version $T$ is a spanning tree on $V$. This should be done
  from any distribution with the following two properties:
  \begin{itemize}
  \item[(i)] \emph{(Correct Marginals)} $\Pr[e \in T] = \kappa(z)_e$
  \item[(ii)] \emph{(Negative Correlation)} For any subset of edges $F \subseteq E$, $\Pr[F \subseteq
    T] \leq \prod_{e \in F} \Pr[e \in T]$
  \end{itemize}
  This can be obtained using, for example, the swap rounding approach
  for the spanning tree polytope given by Chekuri et al.~\cite{CVZ10}.
  As in~\cite{AGMSS}, the negative correlation property implies the following
  theorem. The proof is found in \lref[Section]{sec:chernoff}.
  \begin{theorem}\label{thm:chernoff}
  For $n \geq 7$, the tree $T$ is $\alpha$-thin with probability at least $1 - \frac{1}{n-1}$.
  \end{theorem}

  By \lref[Lemma]{lem:fix_sol}(b), property~(i) of the random sampling,
  and Markov's inequality, we get that $\A$ (from \lref[Lemma]{lem:findtree}) is
  $\frac{3}{1-3\tau}$-approximate with respect to $x^*$ with probability
  at least 2/3. By a trivial union bound, for $n \geq 7$ we have
  with probability at least 1/2 that $\A$ is both $\alpha$-thin and $\beta$-approximate with
  respect to $x^*$. It is also weakly connected---i.e., the undirected
  version of $\A$ (namely, $T$) connects all vertices in $V$.

  The statement for $\tau$-narrow $s$-$t$ cuts follows from the fact
  that $z$ satisfies \lref[Lemma]{lem:fix_sol}(c). That is, $\A$ contains
  no arcs of $\partial^-(U)$, since $z(\partial^-(U)) = 0$ (for $U$ being a
  $\tau$-narrow $s$-$t$ cut). But since $T$ is a spanning tree,
  $\A$ must contain at least one arc from $\partial^+(U)$. Finally, since
  $z(\partial^+(U))$ is exactly 1, then any set of arcs supported by this
  distribution we use must have precisely one arc from $\partial^+(U)$.
\end{proof}

\subsection{Augmenting to an Eulerian $s$-$t$ Walk}

We wrap up by augmenting the set of arcs $\A$ to an Eulerian $s$-$t$ walk.
Specifically, we prove the following for general $\alpha \geq 1$.
\begin{theorem}\label{thm:augment}
Suppose we are given a collection of arcs $\A$ that is weakly connected, $\alpha$-thin, and satisfies
$|\partial_{\A}^+(U)| = 1$ and $|\partial_{\A}^-(U)| = 0$ for any $\tau$-narrow $s-t$ cut $U$.
We can find a Hamiltonian $s-t$ path with cost at most
$c(\A) +  (1+\tau^{-1}) \alpha  \sum_{a \in A} c_a x^*_a$ in polynomial time.
\end{theorem}

For this, we use Hoffman's circulation theorem, as
in~\cite{AGMSS}, which we recall here for convenience (see,
e.g,~\cite[Theorem~11.2]{Schrijver-book}):
\begin{theorem}
  \label{thm:hoffman}
  Given a directed flow network $D = (V,A)$, with each arc having a
  lower bound $\ell_a$ and an upper bound $u_a$ (and $0 \leq \ell_a \leq
  u_a$), there exists a circulation $f: A \to \Re_+$ satisfying $\ell_a
  \leq f(a) \leq u_a$ for all arcs $a$ if and only if
  $\ell(\partial^+(U)) \leq u(\partial^-(U))$ for all $U \subseteq V$.
  Moreover, if the $\ell$ and $u$ are integral, then the circulation $f$
  can be taken integral.
\end{theorem}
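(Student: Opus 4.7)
The plan is to handle the two implications separately; necessity is immediate, and sufficiency reduces to the max-flow min-cut theorem via the standard trick of eliminating lower bounds. For necessity, suppose a feasible circulation $f$ exists. Summing flow conservation over any $U \subseteq V$ cancels the arcs internal to $U$ and yields $f(\partial^+(U)) = f(\partial^-(U))$; combining with $\ell_a \leq f_a \leq u_a$ arcwise gives
\[
\ell(\partial^+(U)) \leq f(\partial^+(U)) = f(\partial^-(U)) \leq u(\partial^-(U)).
\]

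For sufficiency, I would substitute $f_a = \ell_a + g_a$, so that the problem becomes: find $g$ with $g_a \in [0, u_a - \ell_a]$ whose net outflow at every vertex $v$ equals $b_v := \ell(\partial^-(v)) - \ell(\partial^+(v))$. I then build an auxiliary digraph by introducing a super-source $s^*$ with arc $(s^*,v)$ of capacity $b_v$ for each $v$ with $b_v > 0$, and a super-sink $t^*$ with arc $(v,t^*)$ of capacity $-b_v$ for each $v$ with $b_v < 0$; the original arcs carry reduced capacity $u_a - \ell_a$. A feasible $g$ then corresponds exactly to an $s^*$-$t^*$ flow saturating every arc leaving $s^*$, i.e., of value $M := \sum_{v \,:\, b_v > 0} b_v$. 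By the max-flow min-cut theorem, such a flow exists iff every $s^*$-$t^*$ cut has capacity at least $M$.

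The step I expect to require the most care is translating this min-cut condition back into the hypothesis on the original digraph. For any such cut, let $U \subseteq V$ denote the original vertices placed on the $s^*$ side. Its capacity decomposes as
\[
\sum_{v \notin U,\, b_v > 0} b_v \;+\; \sum_{v \in U,\, b_v < 0} (-b_v) \;+\; \bigl(u(\partial^+(U)) - \ell(\partial^+(U))\bigr).
\]
Subtracting $M$ on both sides and using the telescoping identity $\sum_{v \in U} b_v = \ell(\partial^-(U)) - \ell(\partial^+(U))$ (arcs with both endpoints in $U$ cancel) should collapse the requirement ``capacity $\geq M$'' to the single inequality $u(\partial^+(U)) \geq \ell(\partial^-(U))$. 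This is exactly the hypothesized condition applied to the complement $V \setminus U$, since $\partial^+(V \setminus U) = \partial^-(U)$ and $\partial^-(V \setminus U) = \partial^+(U)$.

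Integrality comes for free: when $\ell$ and $u$ are integer-valued, all capacities in the auxiliary network are integer, so the integrality of max-flow in integer-capacitated networks yields an integer $g$, and hence an integer circulation $f = \ell + g$.
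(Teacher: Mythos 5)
Your proof is correct. Note, however, that the paper does not actually prove Hoffman's circulation theorem---it merely states it for reference, citing Schrijver's book~\cite[Theorem~11.2]{Schrijver-book}. Your argument is the standard, self-contained proof by reduction to max-flow/min-cut: necessity by summing flow conservation, sufficiency via the substitution $f = \ell + g$ and a super-source/super-sink construction, with the min-cut condition collapsing (after using $\sum_{v \in U} b_v = \ell(\partial^-(U)) - \ell(\partial^+(U))$ and passing to the complement $V \setminus U$) to exactly the hypothesized cut inequality, and integrality inherited from integral max-flow. All of these steps check out; in particular, you are right that the telescoping of $\sum_{v\in U} b_v$ and the complementation step are where care is needed, and you have handled them correctly. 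The only tiny point left implicit is that since $\sum_v b_v = 0$, a flow of value $M$ automatically saturates the arcs into $t^*$ as well, which is what makes the ``saturating every arc leaving $s^*$'' characterization equivalent to feasibility of $g$; this is standard and does not affect correctness.
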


\begin{proof}[Proof of Theorem \ref{thm:augment}]
Set lower bounds $\ell : A \rightarrow \{0,1\}$ on the arcs by:
\[
\ell_a = \left\{
\begin{array}{rl}
1 & {\rm if~} a \in \A {\rm ~or~} a = ts \\
0 & {\rm otherwise}
\end{array}
\right.
\]
For now, we set an upper bound of 1 on arc $ts$ and leave all other arc
upper bounds at $\infty$. We compute the minimum cost circulation
satisfying these bounds (we will soon see why one must exist). Since the
bounds are integral and since $\A$ is weakly connected, this circulation
gives us a directed Eulerian graph. Furthermore, since $u_{ta} =
\ell_{ta} = 1$, the $ts$ arc must appear exactly once in this Eulerian
graph. Our final Hamiltonian $s$-$t$ path is obtained by following an
Eulerian circuit, removing the single $ts$ arc from this circuit to get
an Eulerian $s$-$t$ walk, and finally shortcutting this walk past
repeated nodes. The cost of this Hamiltonian path will be, by the
triangle inequality, at most the cost of the circulation minus the cost
of the $ts$ arc.

Finally, we need to bound the cost of the circulation (and also to prove
one exists). To that end, we will impose stronger upper bounds $u : A
\rightarrow \mathbb R_{\geq 0}$ as follows:
\[
u_a = \left\{
\begin{array}{rl}
1 & \qquad {\rm if~} a = ts \\
1 + (1+ \tau^{-1})\alpha x^*_a & \qquad {\rm if~} a \in \A \\
(1+\tau^{-1})\alpha x^*_a & \qquad {\rm otherwise}
\end{array}
\right.
\]
We use Hoffman's circulation theorem to show that a circulation $f$
exists satisfying these bounds $\ell$ and $u$ (The calculations appear
in the next paragraph.) Since $u$ is no longer integral, the circulation
$f$ might not be integral, but it does demonstrate that a circulation
exists where each arc $a \neq ts$ is assigned at most
$(1+\tau^{-1})\alpha x^*_a$ more flow in the circulation than the number
of times it appears in $\A$. Consequently, it shows that the minimum
cost circulation $g$ in the setting where we only had a non-trivial
upper bound of $1$ on the arc $ts$ can be no more expensive (since there
are fewer constraints), and that circulation $g$ can be chosen to be
integral. The cost of circulation $g$ is at most the cost of $f$, which
is at most
\[ \sum_{a \in A} c_a u_a = \sum_{a \in \A} c_a + (1+\tau^{-1})\alpha \sum_{a \in
  A} c_a x^*_a + c_{ts}. \] Subtracting the cost of the $ts$ arc (since
we drop it to get the Hamilton path),
we get that the final Hamiltonian path has cost at most
\[ c(\A) + (1+\tau^{-1}) \alpha \sum_{a \in
  A} c_a x^*_a. \]



One detail remains: we need to verify the conditions of
\lref[Theorem]{thm:hoffman} for the bounds $\ell$ and $u$.  Firstly, it is
clear by definition that $\ell_a \leq u_a$ for each arc $a$. Now we need
to check $\ell(\partial^+(U)) \leq u(\partial^-(U))$ for each cut
$U$. This is broken into four cases.

\begin{enumerate}
\item $U$ is a $\tau$-narrow $s$-$t$ cut. Then $\ell(\partial^+(U)) =
  1$, since $\A$ contains only one arc in $\partial^+(U)$. But $1 =
  u_{ts} \leq u(\partial^-(U))$.

\item $U$ is an $s$-$t$ cut, but not $\tau$-narrow. Then by the
  $\alpha$-thinness of $\A$ and \lref[Claim]{claim:lp},
  \[ \ell(\partial^+(U)) \leq \alpha x^*(\partial^+(U)) = \alpha
  x^*(\partial^-(U)) + \alpha. \] On the other hand,
  \[
  u(\partial^-(U)) \geq(1+\tau^{-1})\alpha x^*(\partial^-(U)) = \alpha
  x^*(\partial^-(U)) + \tau^{-1}\alpha x^*(\partial^-(U)) \geq \alpha
  x^*(\partial^-(U)) + \alpha
  \]
  where the last inequality used the fact that $x^*(\partial^-(U)) \geq \tau$.

\item $U$ is a $t$-$s$ cut. Then by the $\alpha$-thinness of $B$ and \lref[Claim]{claim:lp},
  \[ \ell(\partial^+(U)) \leq 1 + \alpha x^*(\partial^+(U)) = 1 + \alpha
  x^*(\partial^-(U)) - \alpha \leq \alpha x^*(\partial^-(U)), \]
  the last inequality using that $\alpha \geq 1$.  Moreover
  \[ u(\partial^-(U)) \geq (1+\tau^{-1})\alpha x^*(\partial^-(U)) \geq
  \alpha x^*(\partial^-(U)). \] Then $\ell(\partial^+(U)) \leq
  u(\partial^-(U))$.

\item $U$ does not separate $s$ from $t$. Then
  \[ \ell(\partial^+(U)) \leq \alpha x^*(\partial^+(U)) = \alpha
  x^*(\partial^-(U)) \leq (1+\tau^{-1})\alpha x^*(\partial^-(U)) \leq
  u(\partial^-(U)) \]
\end{enumerate}
\end{proof}

The proof of our main result, \lref[Theorem]{thm:main}, follows immediately from
\lref[Theorem]{thm:augment} and \lref[Lemma]{lem:findtree} and setting $\tau = 1/4$.
Furthermore, this proof also shows that there is a randomized polynomial time algorithm
that constructs a Hamiltonian $s-t$ path witnessing this integrality gap bound with probability at least 1/2.
%
%
%
%

\section{Guaranteeing $\alpha$-Thinness}
\label{sec:chernoff}
We prove \lref[Theorem]{thm:chernoff} in this section.  Recall that
$\alpha$-thin means the number of arcs chosen from $\partial^+(U)$
should not exceed $\alpha x^*(\partial^+(U))$ (so a directed version).
Let $\alpha := \left(2+\frac{1}{\tau}\right)\cdot\frac{24 \log
  n}{\log\log n}$ where the logarithm is the natural logarithm. Recall
that $\A$ is the set of arcs found with corresponding undirected
spanning tree $T$.  By the first property of the distribution
(preservation of marginals on singletons) we have for each $\emptyset
\subsetneq U \subsetneq V$ that $\E[|\partial_T(U)|] =
\kappa(z)(\partial(U))$.

We have negative correlation on subsets of items, so we can apply standard concentration bounds.
Specifically, we use the following inequality.
\begin{theorem}{\cite[Theorem 3.4]{PS97}}\label{thm:ps}
Let $X_1, \ldots, X_n$ be given 0-1 random variables with $X = \sum_i X_i$ and $\mu = \E[X]$ such that for all $I \subseteq [n]$,
$\Pr[\bigwedge_{i \in I} X_i = 1] \leq \prod_{i \in I} \Pr[X_i = 1]$. Then for any $\delta > 0$ we have
\[ \Pr[X > (1 + \delta) \cdot \mu] \leq \left(\frac{e^\delta}{(1 + \delta)^{1+\delta}}\right)^{\mu}. \]
\end{theorem}

For notational simplicity, let $z' := \kappa(z)$.
\lref[Theorem]{thm:ps} immediately shows
\[ \Pr[|\partial_T(U)| \geq (1+\delta) z'(\partial(U))] \leq \left(\frac{e^\delta}{(1+\delta)^{(1+\delta)}}\right)^{z'(\partial(U))}.\]
Let $\sigma := \frac{6 \log n}{\log\log n}$ (again using the natural logarithm) and use \lref[Theorem]{thm:ps} with $\delta = \sigma-1$.
For $n \geq 7$, the above expression is bounded (in a manner similar to \cite{AGMSS}) by
\begin{eqnarray*}
\left(\frac{e}{\sigma}\right)^{\sigma z'(\partial(U))} \leq e^{-z'(\partial(U)) 5 \log n} = n^{-5 z'(\partial(U))}.
\end{eqnarray*}
However, for any graph, there are at most $n^{2l}$ cuts whose capacity
is at most $l$ times the capacity of the minimum
cut~\cite{KS96}. 
Note that the minimum cut with capacities $z'$ is 1, so
there are at most $n^{2l}$ cuts of the undirected graph $H$ with capacity (under $z'$) at most $l$.
Another way to view this is that there are at most $n^{2(l+1)}$ cuts whose capacity is between $l$ and $l+1$.
For each such cut $U$, the previous analysis shows that
probability that $|\partial_T(U)| > (1+\delta)z'(\partial(U))$ is at most $n^{-5l}$.
Thus, by the union bound, the probability that $|\partial_T(U)| > (1+\delta)z'(\partial(U))$
for some $\emptyset \subsetneq U \subsetneq V$ is bounded by
\[
\sum_{i=1}^\infty n^{2(i+1)}\cdot n^{-5i} \leq \sum_{i=1}^\infty n^{-i} = \frac{1}{n-1}
\]

Since $|\partial_\A^+(U)| \leq |\partial_{T}(U)|$, then we have just seen that with probability at least $1-\frac{1}{n-1}$ that there is no $\emptyset \subsetneq U \subsetneq V$
with $|\partial_\A^+(U)| > \sigma \cdot z'(\partial(U))$. This is close to what we want, except we should bound $|\partial_\A^+(U)|$ against the $x^*$-capacity of $U$.
That is, we ultimately want to show $|\partial_\A^+(U)| \leq \alpha \cdot x^*(\partial^+(U))$ for every $\emptyset \subsetneq U \subsetneq V$. To do this, we consider two cases.

\begin{itemize}
\item If either $U$ or $V-U$ is a $\tau$-narrow $s-t$ cut. Then we ignore the above analysis and simply note that by the properties of $z$ guaranteed by \lref[Lemma]{lem:fix_sol}
either $|\partial_\A^+(U)| = 1$ (if $s \in U$) or $|\partial_\A^+(U)| = 0$ (if $t \in U$), both of which are bounded by $\alpha \cdot x^*(\partial^+(U))$.
\item Otherwise, either $U$ or $V-U$ is an $s-t$ cut that is not $\tau$-narrow, or $U$ does not separate $s$ from $t$. In any case, we have
$x^*(\partial^+(U)) + x^*(\partial^-(U)) \leq 2 x^*(\partial^+(U)) + 1$ (by Claim \ref{claim:lp}) and $x^*(\partial^+(U)) \geq \tau$.
%
Since $\tau \leq 1/4$, then $\frac{1}{1-3\tau} \leq 4$ so $z' \leq 4x^*$. So,
\begin{eqnarray*}
|\partial_\A^+(U)| & \leq & \sigma \cdot z'(\partial(U)) \\
& \leq & 4\sigma  \cdot (x^*(\partial^+(U)) + x^*(\partial^-(U))) \\
& \leq & 8\sigma \cdot x^*(\partial^+(U)) + 4\sigma \\
& \leq & 8\sigma \cdot x^*(\partial^+(U)) + \frac{4\sigma}{\tau} \cdot x^*(\partial^+(U)) \\
& = & \alpha \cdot x^*(\partial^+(U)).
\end{eqnarray*}
%
%
Summarizing, for $n \geq 7$ we have with probability at least $1-\frac{1}{n-1}$ that
\[ |\partial^+_\A(U)| \leq \alpha x^*(\partial^+(U)) = \Theta \left(\frac{\log n}{\log\log n}\right) x^*(\partial^+(U)).\]
That is, $\A$ is $\alpha$-thin with high probability.

\end{itemize}

\section{Improved Bounds from Thin Tree Conjectures}
\label{sec:thin}


In \lref[Section]{sec:patching}, we defined thinness of a set of directed arcs with respect to an LP solution. Here, we define it for undirected graphs
with respect to the original graph itself.
\begin{definition}
Let $G = (V,E)$ be an undirected graph. A spanning tree $T$ of $G$ is said to be $\alpha$-thin if for every cut $U$
we have $|\partial_{T}(U)| \leq \alpha \cdot |\partial(U)|$.
\end{definition}

The following conjecture was given by Goddyn~\cite{goddyn}.
\begin{conjecture}\label{conj:thin}
There is some constant $\gamma$ such for any $k \geq 1$, any undirected $k$-edge connected graph has a $\frac{\gamma}{k}$-thin spanning tree.
\end{conjecture}

Oveis-Gharan and Saberi~\cite{GS11} show that assuming \lref[Conjecture]{conj:thin} is true, there is an $O(1)$-approximation for the ATSP problem by bounding the integrality gap of the subtour elimination LP. We generalize the result for ATSPP in \lref[Theorem]{thm:thin}. While the proof follows the same outline,  there are some technicalities that must be overcome in the case of ATSPP which we outline.

\begin{theorem}\label{thm:thin}
If \lref[Conjecture]{conj:thin} is true, then the integrality gap of the subtour elimination LP for ATSPP is at most $248\gamma + 60$.
\end{theorem}

\lref[Theorem]{thm:thin} follows immediately from \lref[Theorem]{thm:augment} once we show the following. For notational simplicity, we will set the value of $\tau$ to 1/4 for the remainder of this section.

\begin{lemma}\label{lem:thin}
If \lref[Conjecture]{conj:thin} is true, then we can find a $(48\gamma + 12)$-thin collection of arcs $\A$ of cost at most $8\gamma \cdot c(x^*)$ satisfying the
requirements of \lref[Theorem]{thm:augment}.
\end{lemma}

First we recall a result by Oveis Gharan and Saberi~\cite{GS11} which will play an important role in our proof. We state a more specific
form of their proposition.
\begin{proposition}{\cite{GS11}}\label{prop:thintree}
If \lref[Conjecture]{conj:thin} is true, then every $k$-edge connected graph $G(V,E)$ with edge costs $c_e \geq 0, e \in E$ has a
$\frac{2\gamma}{k}$-thin spanning tree with cost at most $\frac{2\gamma}{k} c(E)$.
\end{proposition}

\begin{proof}[Proof of Lemma \ref{lem:thin}]
Let $x^*$ be an optimum LP solution. We cannot invoke \lref[Proposition]{prop:thintree} directly on a scaled version of $\kappa(x^*)$
(as was done for ATSP in \cite{GS11}) since the resulting thin tree might cross the narrow cuts more than once or, perhaps, in the wrong direction.
Our solution will be to sample a thin tree on the subgraphs between narrow cuts and chain these together using arcs that cross the narrow cuts
to ensure the resulting tree crosses the narrow cuts exactly once.

Recall the definition of $\tau$-narrow cuts (again, we use $\tau = 1/4$ here)
and let $L_1, L_2, \ldots, L_{k+1}$ be the sets defined in \lref[Section]{sec:proof-of-lem}.
For every $1 \leq i \leq k+1$, let $x^i$ denote the restriction of $x^*$ to $L_i$. That is, $x^i_a = x^*_a$ if $a \in A(L_i)$ and $x^i_a = 0$ otherwise.
Recall by \lref[Corollary]{cor:parts} that $x^i(\partial(U; L_i-U)) \geq 1-2\tau = 1/2$ for any $\emptyset \subsetneq U \subsetneq L_i$.

For each $1 \leq i \leq k+1$ with $|L_i| \geq 2$, create an undirected graph $G_i(L_i, E_i)$ where $E_i$ will contain many copies of edges between nodes in $L_i$.
Similar to the proof of Theorem 5.3 in~\cite{GS11}, round down each $x^i_a$ value to its nearest multiple of $1/4n^3$ and call this value $z^i_a$.
Add $4n^3 \cdot z^i_a$ copies of the undirected version of arc $a$ to $E_i$ for each $a \in A(L_i)$, each with cost $c_a$.
Since $z^i_a\geq x^i_a-\frac{1}{4n^3}$, for every cut $U$ of $G_i$ we have $\kappa(z^i)(\partial(U)) \geq \kappa(x^i)(\partial(U)) - n^2/4n^3 \geq 1/2 - 1/(4n) \geq 1/4$.
Therefore, we have $\partial_{E_i}(U) \geq n^3$ for every cut $U$ of $G_i$.

By \lref[Proposition]{prop:thintree}, we may find a $\frac{2\gamma}{n^3}$-thin spanning tree $T_i$ of $G_i$ with cost bounded by
\[ \frac{2\gamma}{n^3} \cdot c(E_i) \leq \frac{2\gamma}{n^3} 4 n^3 c(x^i) = 8\gamma \cdot c(x^i). \]
Let $ \A_i$ be the original (directed) arcs of the graph $G$ that are used by $T_i$.

Next, for each $1 \leq i \leq k$, let $a_i$ denote the cheapest arc in $\partial(L_i;L_{i+1})$. By \lref[Lemma]{lem:li+}
with $\tau = 1/4$, $c_{a_i} \leq 4 \sum_{a \in \partial(L_i;L_{i+1})} c_a x^*_a$.

Finally, let $ \A = \left(\cup_{i=2}^k  \A_i\right) \cup \{a_i : 1 \leq i \leq k\}$ and note that because the cost of $ \A_i$ was
charged to the LP cost for arcs in $A(L_i)$ and the cost of $a_i$ was charged to the LP cost for edges in $\partial(L_i; L_{i+1})$, then
$c(\A) \leq \max\{8\gamma, 4\} c(x^*) = 8\gamma \cdot c(x^*)$ (clearly \lref[Conjecture]{conj:thin} can only hold for $\gamma \geq 1$).

From construction, $|\partial^+_{\A}(U)| = 1$ and $|\partial^-_{\A}(U)| = 0$ for any $\tau$-narrow cut $U$.
Since $\A$ is formed by chaining together weakly connected subgraphs in each $L_i$ using edges in $\partial(L_i;L_{i+1})$, it is weakly connected.

We finish by showing that $\A$ is $O(1)$-thin with respect to $x^*$. Consider any cut $U$ of $G$.
If $U$ or $V-U$ is a $\tau$-narrow cut then $|\partial_\A^+(\partial(U))| \leq x^*(\partial^+(U))$ by construction of $\A$ and feasibility of $x^*$ as a solution to the subtour elimination LP.

Otherwise, let $Q = \{a_i : 1 \leq i \leq k\} \cap \partial_\A^+(U)$ and let $Q_i = \partial^+_\A(U \cap L_i ; L_i - U) = \partial^+_{\A_i}(U)$ for each $1 \leq i \leq k+1$ with $|L_i| \geq 2$.
Note that $\partial_\A^+(U) = Q \cup \left(\bigcup_{i : |L_i| \geq 2} Q_i\right)$.

For each $2 \leq i \leq k$ we have
\begin{eqnarray*}
|Q_i| & = & |\partial_{\A_i}(U \cap L_i;L_i-U)| \\
& \leq & \frac{2\gamma}{n^3} \cdot |\partial_{E_i}(U \cap L_i;L_i-U))| \\
& \leq & \frac{2\gamma}{n^3} \cdot 4n^3 \kappa(x^*)(\partial(U \cap L_i;L_i-U)) \\
& = & 8 \gamma \cdot \kappa(x^*)(\partial(U \cap L_i;L_i-U))
\end{eqnarray*}

Finally, we bound $|Q|$. If $a_i \in Q$ then it cannot be the case that $L_i \cap U = \emptyset$ nor can it be the case that $L_{i+1} \subseteq U$.
So, at least one of the three following cases must hold:
\begin{enumerate}
\item $L_i-U \neq \emptyset$; we charge the occurrence of $a_i \in Q$ to the quantity $\kappa(x^*)(\partial(L_i \cap U; L_i - U)) \geq 1-2\tau = 1/2$
(cf.\ \lref[Corollary]{cor:parts}).
\item $L_{i+1} \cap U \neq \emptyset$; we charge the occurrence of $a_i \in Q$ to the quantity $\kappa(x^*)(\partial(L_{i+1} \cap U; L_{i+1} - U)) \geq 1/2$.
\item $L_i\subseteq U$ and $L_{i+1}\cap U=\emptyset$ and therefore, $\partial(L_i;L_{i+1}) \subseteq \partial^+(U)$. In this case, we charge the occurrence of $a_i \in Q$ to the quantity $x^*(\partial(L_i; L_{i+1})) \geq 1-3\tau \geq 1/2$
(cf.\ \lref[Lemma]{lem:li+}).
\end{enumerate}
In each of the cases, the edges whose $x^*$ values were charged all lie in $\partial^+(U)$ or $\partial^-(U)$. Furthermore, every edge is charged at most twice
this way. If $e \in \partial(L_i; L_{i+1})$ then it is charged at most once (for $a_i$), if $e \in A(L_i)$ then it is charged at most once for $a_{i-1}$ and at most
once for $a_i$. Overall, we see $|Q| \leq 2\kappa(x^*)(\partial(U))$.

Considering all of these bounds, we have
\begin{eqnarray*}
|\partial_{\A}^{+}(U)| & = & |Q| + \sum_{i=2}^k |Q_i| \\
& \leq & 2 \cdot \kappa(x^*)(\partial(U)) + 8 \gamma \sum_{i=2}^k \kappa(x^*)(\partial(U \cap L_i;L_i-U)) \\
& \leq & (8\gamma + 2) \kappa(x^*)(\partial(U)) \\
& = & (8\gamma + 2) \cdot (x^*(\partial^+(U)) + x^*(\partial^-(U))) \\
& \leq & (8\gamma + 2) \cdot \left(x^*(\partial^+(U)) + \left(\frac{1}{\tau} + 1\right)x^*(\partial^+(U))\right) \\
& = & (8\gamma + 2) \cdot \left(\frac{1}{\tau} + 2\right) \cdot x^*(\partial^+(U)) \\
& = & (48 \gamma + 12) \cdot x^*(\partial^+(U))
\end{eqnarray*}
\end{proof}

The collection of arcs $\A$ is $(48 \gamma + 12)$-thin and has cost at most $8\gamma \cdot c(x^*)$.
Furthermore, $\A$ satisfies $|\partial_\A^+(U)| = 1$ and $|\partial_\A^-(U)| = 0$ for every  $\tau$-narrow $s-t$ cut $U$.
From \lref[Theorem]{thm:augment}, we can then obtain a ATSPP solution with cost at most
\[ c(\A) + (1 + \tau^{-1})(48\gamma + 12) c(x^*) \leq (248 \gamma + 60) \cdot c(x^*). \]
This completes the proof of \lref[Theorem]{thm:thin}.

We have not attempted to optimize the constants in our analysis. For example, a more careful scaling of $x^*$ to get the $z_a$ values
in the above proof will improve the constants.

\section{A Simple Integrality Gap Example}
\label{sec:int-gap}

In this section, we show that the integrality gap of the subtour
elimination LP~(\ref{eq:lp}) is at least~$2$. This result can also be
inferred from the integrality gap of $2$ for the ATSP tour
problem~\cite{CGK06}, but our construction is relatively simpler.

For a fixed integer $r \geq 1$, consider the directed graph $G_r$
defined below (and illustrated in \lref[Figure]{fig:gap}). The vertices of
$G_r$ are $\{s,t\} \cup \{u_1, \ldots, u_r\} \cup \{v_1, \ldots, v_r\}$;
the arcs are as follows:
\begin{OneLiners}
\item $\{ su_1, sv_1, u_rt, v_rt \}$, each with cost 1,
\item $\{ u_1v_r, v_1u_r \}$, each with cost 0,
\item $\{u_{i+1}u_i \mid 1 \leq i < r\} \cup \{v_{i+1}v_i \mid 1 \leq i < r\}$,
  each with cost 1,
\item and $\{u_iu_{i+1} \mid 1 \leq i < r\} \cup \{v_iv_{i+1} \mid 1 \leq i <
  r\}$, each with cost 0.
\end{OneLiners}
Let $F_r$ denote the ATSPP instance obtained from the metric completion
of $G_r$.

\begin{figure}
 \centering
    \scalebox{0.65}{\psfig{figure=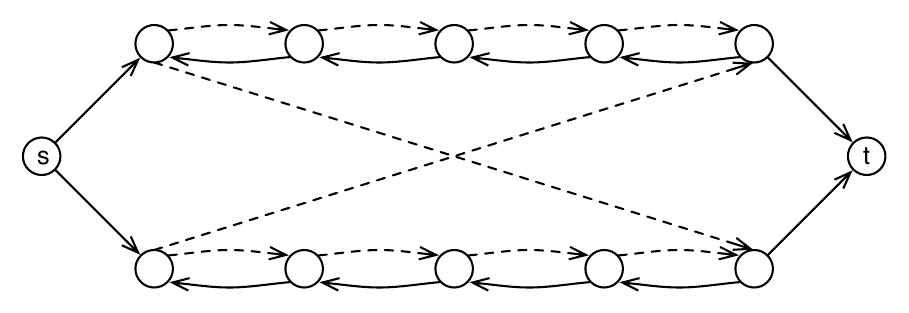}}
  \caption{The graph $G_r$ with $r = 5$. The solid arcs have cost 1 and
    the dashed arcs have cost 0.}\label{fig:gap}
\end{figure}

\begin{lemma} \label{lem:intgap}
  The integrality gap of the LP~\ref{eq:lp} on the instance $F_r$ is
  at least $2-o(1)$.
\end{lemma}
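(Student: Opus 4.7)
The plan is to prove the two inequalities $\text{LP}(F_r) \leq r+1$ and $\text{OPT}(F_r) \geq 2r-1$, which together give an integrality gap of at least $(2r-1)/(r+1) = 2 - 3/(r+1) = 2 - o(1)$.

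For the upper bound on the LP value, I would exhibit an explicit feasible solution $x^\ast$ as the sum of three half-weighted flows on the arcs of $G_r$: (i) the $s$-$t$ path $s \to u_1 \to u_2 \to \cdots \to u_r \to t$ with weight $1/2$; (ii) the analogous path on the $v$-chain with weight $1/2$; and (iii) the directed cycle $u_1 \to v_r \to v_{r-1} \to \cdots \to v_1 \to u_r \to u_{r-1} \to \cdots \to u_1$ with weight $1/2$, which uses the two free crossing arcs and all $2(r-1)$ backward chain arcs. Pieces (i) and (ii) each contribute $1/2$ unit of $s$-to-$t$ flow, so together they carry unit $s$-to-$t$ flow (hence all $s$-$t$ cut constraints are satisfied); piece (iii) is a circulation, and adding it brings the in- and out-degree at every non-$s,t$ vertex to exactly $1$. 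The only arcs of positive cost receiving weight are the four cost-$1$ endpoint arcs (each at weight $1/2$, contributing $2$) and the $2(r-1)$ backward chain arcs (each at weight $1/2$, contributing $r-1$), giving a total cost of $r+1$.

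For the lower bound on OPT, I would lift any Hamilton path $P$ in the metric completion $F_r$ to a spanning walk $W$ in $G_r$ by expanding each metric arc into a shortest $G_r$-path, so that $\mathrm{cost}(P) = \mathrm{cost}(W)$; it then suffices to show that every spanning $s$-$t$ walk in $G_r$ has cost at least $2r-1$. By the symmetry between the $u$- and $v$-chains, I assume the first arc of $W$ is $s \to v_1$ (cost $1$). The critical structural observation is that in $G_r$ the only in-arcs to $u_i$ ($1 < i < r$) are the forward $u_{i-1} \to u_i$ and the backward $u_{i+1} \to u_i$; the only in-arc to $u_1$ other than $s \to u_1$ (which cannot be reused) is $u_2 \to u_1$; and the only free in-arc to $u_r$ from outside the $u$-chain is $v_1 \to u_r$. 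Chasing first-visit arcs inductively down the $u$-chain, the first visits to $u_1, \ldots, u_{r-1}$ are forced to be via backward arcs $u_{k+1} \to u_k$, and the first visit to $u_r$ is forced to be via $v_1 \to u_r$, so $W$ uses each of the $r-1$ backward $u$-arcs at least once.

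The remaining and subtlest part of the argument is to show that $W$ must also use at least $r-2$ backward $v$-arcs. Here I would argue by cases on how $v_r$ is first visited: either via the forward arc $v_{r-1} \to v_r$, in which case an induction forces all $v$'s to be visited in order $v_1, \ldots, v_r$ and then the walk still has to return from $v_r$ to $v_1$ using $r-1$ backward $v$-arcs in order to use the only cheap crossing $v_1 \to u_r$ to visit any $u$; or via the free crossing $u_1 \to v_r$, in which case a short case analysis on how many $v$'s were visited forward from $v_1$ before this crossing shows that the total backward $v$-arcs used (the "forward-then-back" segment before the crossing, plus the descent from $v_r$ afterwards to visit whatever $v$'s remain) is at least $r - 2$. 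Combined with the end arc into $t$ of cost $1$, the start arc of cost $1$, and the $r - 1$ backward $u$-arcs, the total cost is at least $2r - 1$. The main obstacle is ruling out any "clever" walk that saves on both chains simultaneously; the key insight pinning this down is that the two free crossings $u_1 \to v_r$ and $v_1 \to u_r$ deliver the walk to the "far" endpoint of the target chain, from which the only way to reach the rest of that chain is via backward arcs.
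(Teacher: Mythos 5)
Your LP solution is exactly the paper's (weight $1/2$ on every arc of $G_r$), just presented as a flow decomposition; however, the verification you give is incomplete. You check the degree constraints and the cut constraints for $s$-$t$ cuts, but the LP also requires $x(\partial^+(U))\ge 1$ for sets $U$ with $s\in U$ \emph{and} $t\in U$, $U\subsetneq V$, and these do not follow from ``unit $s$-$t$ flow plus circulation'' together with the degree equalities (a disjoint $2$-cycle circulation is a counterexample to that implication). The paper handles every cut at once by observing that $G_r$ has two arc-disjoint paths from $s$ to every other vertex, so any set containing $s$ has at least two arcs of $G_r$ leaving it, giving $x(\partial^+(U))\ge 1$.

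Your lower bound argument reaches the right conclusion but by a genuinely different, and more laborious, route than the paper's, and it leaves the hardest step as an assertion. You fix the first arc by symmetry, deduce all $r-1$ backward $u$-arcs are forced, and then do a case analysis on how $v_r$ is first visited to squeeze out $\ge r-2$ backward $v$-arcs; the ``forward-then-back plus descent'' bookkeeping in the second case is plausible and can be made rigorous with a cut argument, but as written it is hand-waved. The paper instead proves the cleaner global statement that $W$ can omit at most \emph{one} backward arc overall: whichever of $u_r,v_r$ is visited first forces all backward arcs of one chain (since any $u_r$--$v_r$ walk must descend the whole $u$-chain to reach $u_1$ and cross $u_1\to v_r$, and symmetrically), and if two backward arcs $u_{i+1}u_i$ and $u_{j+1}u_j$ ($i<j$) in the remaining chain were both omitted, then a pigeonhole on whether $u_{i+1}$ or $v_r$ is visited first yields a contradiction, since a $u_{i+1}$--$v_r$ walk is forced through $u_{i+1}u_i$ and a $v_r$--$u_{i+1}$ walk through $u_{j+1}u_j$. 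That two-step argument replaces your case analysis with a single symmetric contradiction and is worth internalizing; your version also works once the missing cut-counting details are filled in, and it has the minor advantage of making the optimal walk (which uses exactly $2r-3$ backward arcs) visible from the proof structure.
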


\begin{proof}
  It is easy to verify that assigning $x_a = 1/2$ to each arc that originally
  appeared in $G_r$ is a valid LP solution. Indeed, the degree constraints are immediate,
  and there are two edge-disjoint paths from $s$ to every other node in
  $G_r$ (so there must be at least 2 arcs exiting any subset containing $s$)
  so the cut constraints are also satisfied. The total cost of
  this LP solution is $r+1$.

  On the other hand, we claim that the cost of any Hamiltonian $s$-$t$
  path in $F_r$, which corresponds to a spanning $s$-$t$ walk $W$ in
  $G_r$, is at least $2r-1$. This shows an integrality gap of
  $\frac{2r-1}{r+1} = 2 - o(1)$.

  To lower-bound the length of any spanning $s$-$t$ walk, we first argue
  that the walk $W$ can avoid using at most one of the unit cost arcs
  of the form $u_{i+1}u_i$ or $v_{i+1}v_i$. Indeed, any $u_r$-$v_r$ walk
  must use arcs $u_{i+1}u_i$ for every $1 \leq i < r$. Similarly, every
  $v_r$-$u_r$ walk must use all arcs of the form $v_{i+1}v_i$. One of
  $u_r$ and $v_r$ is visited before the other, so either all of the
  $u_{i+1}u_i$ arcs or all of the $v_{i+1}v_i$ arcs are used by
  $W$. Now suppose, without loss of generality, that $W$ does not use
  the arcs $u_{i+1}u_i$ and $u_{j+1}u_j$ for $1 \leq i < j < r$.  Every
  $u_{i+1}$-$v_r$ walk uses arc $u_{i+1}u_i$ and every $v_r-u_{i+1}$
  walk uses arc $u_{j+1}u_j$. Since one of $u_{i+1}$ or $v_r$ must be
  visited by $W$ before the other, then $W$ cannot avoid both
  $u_{i+1}u_i$ and $u_{j+1}u_j$ which contradicts our assumption.

  Thus, $W$ must use all but at most one of the $2r-2$ unit cost arcs
  in $\{u_{i+1}u_i \mid 1 \leq i < r\} \cup \{v_{i+1}v_i \mid 1 \leq i <
  r\}$. Moreover, $W$ must also use one of the arcs exiting $s$ and one
  of the arcs entering $t$, so the cost of $W$ is at least $2r-1$.
  (In fact, the walk
\[ \langle s, u_1, v_r, v_{r-1}, \ldots, v_1, u_r, u_{r-1}, \ldots, u_3, u_2, u_3, \ldots, u_r, t \rangle \]
 is of length exactly $2r-1$, so this argument is tight.)
\end{proof}

\section{Conclusion}


In this paper we showed that the integrality gap for ATSPP is $\factor$.
We also show that a constant integrality gap bound follows
from the form of Goddyn's conjecture used in \cite{GS11}
to get an analogous ATSP integrality gap bound.
We also showed a simpler construction achieving a lower
bound of $2$ for the subtour elimination LP.
One of the main open questions following this work is to
show a more general reduction: does an $\alpha$ integrality
gap bound for ATSP directly imply an $O(\alpha)$ integrality gap bound
for ATSPP without any further assumptions?

\subsubsection*{Acknowledgments.} We thank V.~Nagarajan for enlightening
discussions in the early stages of this project. Z.F.\ and A.G.\ also
thank A.~Vetta and M.~Singh for their generous hospitality. Part of this work was done when
Z.F.\ was a postdoctoral fellow in the Department of Combinatorics and Optimization at the University of
Waterloo, when A.G.\ was visiting the IEOR Department at Columbia
University, and  when M.S.\ was at McGill University.
Finally, we thank anonymous reviewers for many helpful comments and the suggestion to obtain better bounds through thin tree conjectures.

\bibliographystyle{abbrv}
{\small \bibliography{atsp}}

\end{document}